\algrenewcommand{\algorithmicrequire}{\textbf{Input:}}
\algrenewcommand{\algorithmicensure}{\textbf{Output:}}
\newtheorem{theorem}{Theorem}
\begin{document}

\title{Efficient MoE Inference with Fine-Grained Scheduling of Disaggregated Expert Parallelism}

\renewcommand{\shorttitle}{FinDEP}

\author{
Xinglin~Pan$^{1}$, Shaohuai~Shi$^{2}$, Wenxiang~Lin$^{2}$, Yuxin~Wang$^{3}$, Zhenheng~Tang$^{4}$, Wei~Wang$^{4}$, Xiaowen~Chu$^{1,4}$}
\affiliation{$^{1}$The Hong Kong University of Science and Technology (Guangzhou) \country{China}} 
\affiliation{$^{2}$Harbin Institute of Technology, Shenzhen \country{China}}
\affiliation{$^{3}$Hong Kong Baptist University \country{Hong Kong SAR}}
\affiliation{$^{4}$The Hong Kong University of Science and Technology \country{Hong Kong SAR}}

\sloppy
\begin{abstract}
 The mixture-of-experts~(MoE) architecture is commonly employed in contemporary large language models~(LLMs) due to its advantage of scaling model size with a sublinear increase in computational demand. Nevertheless, the inference of MoE models demands substantial memory, making it memory-intensive in attention layers due to the necessity of accessing key-value (KV) caches and in expert layers, utilizing only a limited number of experts. 
Recent studies attempt to utilize disaggregated expert parallelism (DEP) to distribute attention and experts to two dedicated GPU groups, the attention group (AG) and the expert group (EG), to improve inference efficiency. However, the existing DEP has limited support for modern MoE models with shared experts, and it under-explores task scheduling in both GPU groups, which have complex communication and computation tasks, leading to suboptimal inference performance.
To address these issues, we propose FinDEP, a \underline{fin}e-grained task scheduling algorithm for DEP with maximal task overlap to improve the inference throughput of MoE models.
FinDEP integrates our three proposed key innovations: 1) partitioning intensive computation and communication tasks to multiple smaller tasks in both AG and EG to enable fine-grained task pipelining w/ or w/o shared experts, 2) formulating an optimization problem to the fine-grained task scheduling that should support different task partition granularity and ordering, and 3) developing an efficient solution to the optimization problem which contains a huge solution space to derive the near-optimal task schedule of DEP. 
Experiments are conducted on four types of GPU systems with two representative MoE backbones, DeepSeek-V2 and Qwen3-MoE. Experimental results show that FinDEP improves inference throughput by up to 1.61$\times$ over state-of-the-art methods. Notably, on a 32-GPU system, FinDEP still achieves a significant speedup of up to $\mathbf{1.24\times}$, demonstrating its efficiency at large scales.
\end{abstract}

\begin{CCSXML}
<ccs2012>
   <concept>
       <concept_id>10010147.10010178.10010219</concept_id>
       <concept_desc>Computing methodologies~Distributed artificial intelligence</concept_desc>
       <concept_significance>500</concept_significance>
       </concept>
 </ccs2012>
\end{CCSXML}

\ccsdesc[500]{Computing methodologies~Distributed artificial intelligence}
\keywords{Mixture-of-Expert, disaggregated expert parallelism, ping-pong parallelism, throughput}


\maketitle

\section{Introduction}

Large language models~(LLMs) are scaling rapidly, and so are their computational costs. For instance, models like Falcon~\cite{falcon} with 180 billion parameters and Llama-3.1~\cite{llama3} with 405 billion parameters exemplify this trend. Mixture-of-Experts (MoE) architectures~\cite{mixtral,MOE_LSTM, GShard} address this challenge by activating only a subset of the model's expert components for each input.
This makes it possible to build much larger models without making training or inference more expensive.
Recent MoE-based LLMs, such as DeepSeek-V3~\cite{deepseek_v3} and Qwen3-MoE~\cite{qwen3}, show that this design can create highly capable models that are still fast and cheap to use. 
As a result, MoE has become a key technique for building future LLMs in a way that balances power and efficiency.

Despite the advantages, running inference on large MoE models remains challenging~\cite{deepseek_v3,cloud384, moe_lightning,adapmoe,fiddler} due to its extensive memory requirement to hold all experts in the MoE layers and key-value (KV) caches in the attention layers. As a result, distributing the MoE model across multiple GPUs has been a common practice~\cite{GShard,deepseek_v3,MegaScale_Infer} for efficient inference through expert parallelism (EP), which assigns experts across GPUs.

\begin{figure}[!t]
    \centering
    \includegraphics[width=\linewidth]{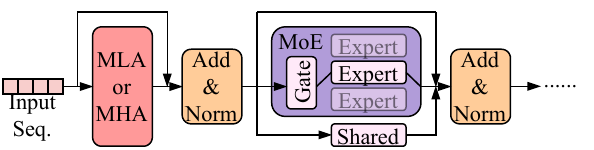}
    \caption{A typical structure of an MoE model. MLA refers to Multi-Head Latent Attention~\cite{deepseek_v2}, while MHA denotes Multi-Head Attention~\cite{mha}. The ``Shared'' block indicates one shared expert or several shared experts, which may be optional depending on the MoE configuration.}
    \label{fig:moe}
\end{figure}

Recent research~\cite{MegaScale_Infer, stepfun, huawei_ae} suggests distributing attention layers and expert layers onto distinct GPUs through disaggregated expert parallelism (DEP)\footnote{Also referred to as Attention-FFN Disaggregation (AFD); we adopt the term DEP following \cite{MegaScale_Infer}.}, due to the different computational and memory access patterns of attention layers and expert layers. This approach enables the modules to scale independently while optimizing the use of various hardware capabilities. In DEP, a multi-GPU system is divided into two groups: the attention group (AG), responsible for storing all attention layers, and the expert group (EG), which holds all non-shared experts. It is important to mention that in certain MoE models like DeepSeek-V3~\cite{deepseek_v3}, shared experts within the MoE layer are often placed in the AG as they need to be processed by all input tokens. 
The dependency between attention and expert layers is substantial, as each attention layer's output serves as the input for the subsequent expert layer, which then outputs to another attention layer as shown in Fig.~\ref{fig:moe}. Consequently, DEP necessitates bidirectional communication: from AG to EG (A2E) and the reverse (E2A). \textit{Data dependencies and communication overhead easily lead to the GPU computational resources idle, thereby limiting inference efficiency.}

Existing optimizations try to alleviate the GPU idle duration of DEP via 1) overlapping computation and communication tasks to reduce the communication time with the ping-pong pipeline (PPPipe) algorithm proposed in MegaScale-Infer~\cite{MegaScale_Infer} or 2) offloading communication tasks to CPU resources to enable overlaps between CPU communications and GPU computations in StepMesh~\cite{stepfun}. These techniques enable only coarse-level task scheduling by dividing a mini-batch into several micro-batches. \textit{As a result, different tasks from these micro-batches can be executed in a pipeline fashion, but this does not sufficiently hide A2E/E2A communications, leading to suboptimal inference efficiency.} Moreover, certain cutting-edge MoE models such as DeepSeek series~\cite{dai2024deepseekmoe,deepseek_v2,deepseek_v3} introduce shared experts within the MoE layer, which are required to compute for every input token, similar to the attention layer, leading to increased GPU idle time.

In this paper, we propose FinDEP, a fine-grained task scheduling framework for MoE inference with DEP to address the above two efficiency problems by three key innovations. 
(1) We partition time-consuming tasks including computations in EG, communications in A2E and E2A, and computations in AG into smaller tasks by splitting each task's input tensor into several segments (denoted as $r$). This partitioning of the tensor creates $r$ smaller tasks per original task, allowing for dynamic scheduling aimed at improving the throughput for MoE models, regardless of whether they have shared experts.
(2) Intuitively, increasing $r$ allows greater parallelization for enhanced overlapping. However, this also increases the launch overheads associated with executing tasks, such as kernel dispatch on GPUs and communication startup costs. Thus, a balance must be built between the advantages of overlapping and the execution overheads. Consequently, we construct performance models for computation tasks in AG and EG and their A2E/E2A communication tasks. Using these models, we establish an optimization problem to characterize the DEP inference time with fine-grained task scheduling, including task ordering and tensor partition granularity.
(3) We develop an efficient algorithm to find the near-optimal solution to the formulated optimization problem with a polynomial time complexity, thus avoiding the very time-consuming brute-force search on the huge solution space. 

We conduct extensive experiments on four GPU systems with two representative MoE model backbones, DeepSeek-V2 (with shared experts) and Qwen3-MoE (without shared experts). Experimental results show that our FinDEP achieves speedups of upto 1.61$\times$ over the best-configured PPPipe algorithm in MegaScale-Infer. Furthermore, on the 32-GPU system, FinDEP consistently provides a speedup of up to $\mathbf{1.24\times}$. Beyond peak throughput, we also confirm the computational efficiency of our fine-grained task scheduling solver. Our solver is highly efficient, taking less than one second to compute the near-optimal configuration. This minimal overhead enables real-time adaptation to dynamic workloads, which is crucial for maximizing throughput in online serving environments with dynamically varying sequence lengths and batch sizes.

\section{Background and Motivations}\label{sec:background}
This section provides an overview of background concepts, followed by a summary of the motivations for this research. For clarity, Table~\ref{tab:notations} offers a summary of the frequently used notations throughout the paper.

\begin{table}[!t]
\centering
\caption{Notations.}
\resizebox{\linewidth}{!}{%
\label{tab:notations}
\begin{tabular}{ll}
\toprule
\textbf{Name} & \textbf{Description}  \\
\midrule
$P$ & \# of GPUs in the cluster. \\
$ag$ & Size of attention group~(AG). \\
$eg$ & Size of expert group~(EG). \\
$m_a$ & \# of samples per micro-batch per GPU in AG. \\
$m_e$ & \# of tokens per micro-batch per expert. \\
$S$ & Sequence length of each sample. \\
$E$ & Total number of global experts. \\
$T$ & Total number of layers. \\
$M$ & Embedding size for each token. \\
$H$ & Hidden size of the feed-forward layer within experts. \\
$top_k$ & \# of experts activated per token. \\
$r_1$ & Pipeline degree of the AG. \\
$r_2$ & Fine-grained pipeline degree of the EG. \\
\bottomrule
\end{tabular}%
}
\end{table}

\subsection{MoE Layer}


MoE models replace each dense feed-forward network (FFN) in transformers with sparsely activated FFNs (or experts) by the MoE layer, as shown in Fig.~\ref{fig:moe}. 
Each token is routed to $k$ experts via a gating function: the gate computes routing scores over all experts, applies a softmax function, and selects the top-$k$ experts for each token~\cite{Auxiliary-loss-free}. The input is partitioned accordingly, with each expert processing only its assigned tokens. Some implementations include a shared expert that processes all tokens~\cite{deepseek_v2, deepseek_v3, Rajbhandari2022deepspeed_moe}, while it is optional in some implementations like Qwen3-MoE~\cite{qwen3}. The layer output of the MoE layer is the aggregated contributions from selected experts (and the shared expert if present). 

\subsection{Disaggregated Expert Parallelism and Ping-pong Pipeline}\label{subsec:dep}

\begin{figure}[t]
    \centering
    \includegraphics[width=0.9\linewidth]{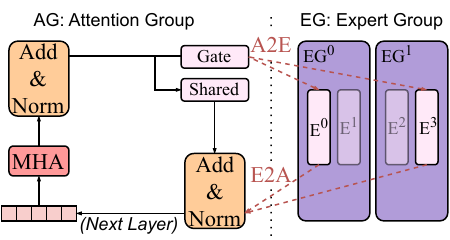}
    \caption{An illustration of DEP. GPUs are partitioned into two groups: AG and EG. AG handles the attention and shared expert computation, while EG handles experts computation. }
    \label{fig:dep}
\end{figure}





\textbf{Disaggregated Expert Parallelism.}  
Disaggregated Expert Parallelism (DEP) is a novel parallelization strategy specifically tailored for the high-throughput, low-latency inference of large MoE-based models. Its foundational principle is the physical separation and independent allocation of core model components across distinct GPU groups. This partitioning divides the available hardware into two dedicated functional units: the Attention Group (AG) and the Expert Group (EG) as shown in Fig.~\ref{fig:dep}. The AG is dedicated to storing and processing the standard components of the Transformer block, including the Self-Attention layers and the Shared Expert (if present), (i.e., components that are densely activated across all tokens). Conversely, the EG houses the entire set of sparse MoE experts, distributed across its constituent devices. This structural disaggregation enables to independently scale the computational resources for each module based on specific memory and computational bottlenecks, a key advantage over monolithic parallel approaches.

A key architectural benefit of DEP is the elimination of intra-group communication overhead. Within the AG, parameters are fully replicated, allowing each device to operate independently without costly collective operations (e.g., All-Reduce). Similarly, within the EG, the inherent sparsity of the token-to-expert routing ensures that an activated expert's computation is confined to a single GPU. This confinement prevents the necessity of communication between expert devices.

The necessary collective communication occurs solely between the two groups through two defined communication phases: 1) Attention-to-Expert (A2E), where tokens processed by the AG are routed to the appropriate expert(s) in the EG, and 2) Expert-to-Attention (E2A), where the expert outputs are gathered and returned to the AG for subsequent layers. 
This disaggregation enables independent scaling of computational resources for each module and the development of tailored parallel strategies~\cite{MegaScale_Infer,huawei_ae}.
The sequential execution of MoE inference with DEP is illustrated in Fig.~\ref{fig:motivation2}(a). Due to the data dependency between modules, the EG remains idle until the AG completes its forward pass and dispatches tokens via A2E communication. Conversely, the AG must wait idly for the EG to finish processing and return results via E2A before it can proceed to the next layer. While this disaggregation offers significant flexibility, this sequential handoff leads to significant device idle time in a naive implementation, as computational resources in one group are consistently underutilized while waiting for the other group to fulfill its part of the pipeline.


\textbf{Ping-Pong Pipeline Parallelism.} To rigorously address the device idle time inherent in the sequential dependency between the Attention Group and the Expert Group, the Ping-Pong Pipeline Parallelism~(PPPipe) algorithm~\cite{MegaScale_Infer} serves as a specialized micro-batch scheduling strategy that enables the concurrent utilization of both AG and EG as shown in Fig.~\ref{fig:motivation2_0}. Specifically, PPPipe divides the input mini-batch into $r_1$ micro-batches (e.g., $r_1=2$ in Fig.~\ref{fig:motivation2_0}) to allow GPUs in EG to begin computations without waiting for the full output of the mini-batch. Thus, in PPPipe, AG and EG computation tasks can be executed in parallel, and the communication tasks can also be overlapped with the computation tasks, thus improving the inference throughput. 

\subsection{Motivations}\label{subsec:motivation}

\begin{figure}[t]
    \centering
    \begin{subfigure}{0.9\linewidth}
        \centering
        \includegraphics[width=\linewidth]{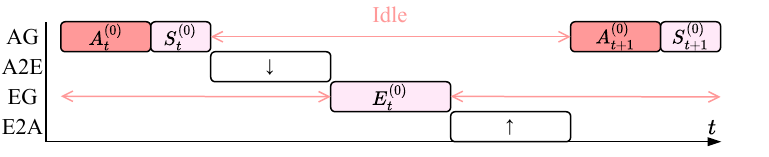}
        \caption{Naive DEP without pipelining}
        \label{fig:motivation2_-1}
    \end{subfigure} \\
    \vspace{2pt}
    \begin{subfigure}{0.9\linewidth}
        \centering
        \includegraphics[width=\linewidth]{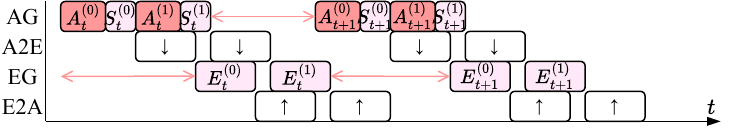}
        \caption{PPPipe with $r_1=2$ (shared expert is a part of attention).}
        \label{fig:motivation2_0}
    \end{subfigure} \\
    \vspace{2pt}
    \begin{subfigure}{0.9\linewidth}
        \centering
        \includegraphics[width=\linewidth]{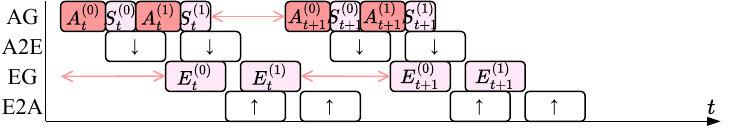}
        \caption{FinDEP with $r_1=2$ and $r_2 = 1$.}
        \label{fig:motivation2_1}
    \end{subfigure} \\
    \vspace{2pt}
    \begin{subfigure}{0.9\linewidth}
        \centering
        \includegraphics[width=\linewidth]{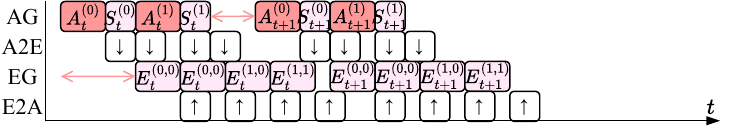}
        \caption{FinDEP with $r_1=2$ and $r_2 = 2$.}
        \label{fig:motivation2_2}
    \end{subfigure}
    \caption{Timeline of naive DEP, PPPipe, and our FinDEP.}
    \label{fig:motivation2}
    \vspace{-2pt}
\end{figure}
While PPPipe in MegaScale-Infer~\cite{MegaScale_Infer} allows AG and EG tasks to be pipelined to reduce the GPU idle time, it is still suboptimal due to the following limitations.

\textit{Computation tasks of the shared expert are not well scheduled.} In PPPipe~\cite{MegaScale_Infer}, it assumes that there is no shared expert in AG, which does not support recent MoE models like DeepSeek-V3. Built atop PPPipe, one can support including the shared expert by regarding it as a part of attention, since both attention and the shared expert should process all input tokens. As shown in Fig.~\ref{fig:motivation2_0}, A2E can only begin after the completion of the shared expert computation. However, the computation of experts in AE has no data dependency with the shared expert; thus, they can also overlap. This means the computation tasks of the shared expert should be well-scheduled to achieve better efficiency.

\textit{Micro-batch level pipelining is insufficient to overlap tasks fully.} 
Existing DEP implementations, including PPPipe, overlook potential performance gains from overlapping communication and computation between attention and expert modules.
While existing solutions focus on overlapping attention and expert computations, they underestimate additional benefits from overlapping A2E or E2A communication with expert computation.
This overlap allows the GPU-resident expert module to begin computation earlier, potentially improving utilization and throughput~\cite{shi2023pipemoe}.
We show an example in Fig.~\ref{fig:motivation2_2}, where dividing the expert into two micro-batches reduces end-to-end execution time. 
However, introducing pipelining can also incur kernel launch overhead, which in some cases may increase rather than reduce expert computation time, thereby worsening bottlenecks.
To address this trade-off, a modern and adaptive pipelining degree is required to balance early computation benefits and kernel launch costs. Balancing overlap benefits and launch costs is crucial for further exploiting throughput potential.

\textit{Huge search space to find an optimal schedule.} 
The integration of shared experts and fine-grained pipeline settings significantly expands the search space for optimal configurations. Consequently, this expansion stems from numerous interdependent design choices, including pipeline degrees, microbatch sizes, and task orders. Similarly, decisions regarding the fine-grained degree and micro-size of experts, alongside the configurations of AG and EG, further compound this complexity. Due to such entanglement, brute-force enumeration becomes impractical. Therefore, an adaptive and efficient algorithm is necessary to explore the design space to find the optimal solution efficiently.

In this paper, we aim to address the above three issues by proposing FinDEP to partition tensors for fine-grained task scheduling with the support of shared experts. Thus, we split the attention input for each GPU along the batch dimension to enable a micro-batch level pipeline. The number of pipelines is denoted by $r_1$, and the micro-batch size per GPU is denoted by $m_a$. Since there are no data dependencies, the shared expert and A2E of each micro-batch can run in parallel, as shown in Fig.~\ref{fig:motivation2_1}. Other task orders are discussed in the next section. Unlike the attention part, which involves interactions between tokens (i.e., intra-sequence), the expert part processes samples token by token. Based on this, we can further partition along the token dimension. The pipeline degree is denoted by $r_2$, and $m_e$ represents the token processed by each expert. An example is shown in Fig.~\ref{fig:motivation2_2}. A primary challenge within FinDEP is to define the optimal problem (\S\ref{sec:problem}) and derive the optimal solution (\S\ref{sec:solution}), which we will present in the next two sections.


\section{Problem Formulation}\label{sec:problem}

The inference time of an MoE model under disaggregated expert parallel decomposes into three primary components: the computation time for the expert feed-forward networks, the computation time for the attention layers, and the communication overhead for transferring activations between the attention and expert groups. We formulate each component as a function of its workload and the underlying hardware characteristics.

\subsection{Execution Time Formulation}


First, for GEMM, we denote the time function as $t_{gm}(x, F)$, where $x = m \times k \times n$ represents the total FLOPs required for multiplying two matrices $A \in \mathbb{R}^{m \times k}$ and $B \in \mathbb{R}^{k \times n}$ on a GPU with peak floating-point performance $F$.

The second part of the computation involves the self-attention mechanism. The time function for the attention computation is denoted as \( t_{attn}(y, F) \), where \( y \) represents the total workload for self-attention, and \( F \) is the GPU's peak performance again. In this case, the workload is defined based on the dimensions of the query (\( Q \)), key (\( K \)), and value (\( V \)) matrices. These matrices have the following shapes: \( Q, K \in \mathbb{R}^{N_h \times B \times S \times D_k} \), and \( V \in \mathbb{R}^{N_h \times B \times S \times D_v} \), where \( N_h \) is the number of attention heads, \( B \) is the batch size, \( S \) is the sequence length, \( D_k \) is the dimensionality of the key, and \( D_v \) is the dimensionality of the value. The core computational burden comes from two GEMMs: the computation of the attention scores via \( QK^\top \), which has a complexity of \( N_h B S^2 D_k \), and the computation of the attention-weighted values \( \text{Attention}(QK^\top)V \), which has a complexity of \( N_h B S^2 D_v \). Therefore, the total workload for self-attention is \( y = N_h B S^2 (D_k + D_v) \).

Third, $t_{{c}}(z, eg, ag)$ denotes the communication time required for data transfer between GPUs. More specifically, it measures the time taken for $ag$ GPUs to send messages to $eg$ other GPUs. Here, $z$ represents the communication workload per machine, while $eg$ and $ag$ refer to the expert and attention group sizes, respectively.

For any given hardware and group configuration, the GPU performance \( F \) and group sizes (\( eg \), \( ag \)) are constant. This stability allows us to simplify the time functions, specifically \( t_{gm}(x) \), \( t_{attn}(y) \), and \( t_{a2e}(z) \), for convenience. These simplified functions provide a foundation for modeling the end-to-end performance of MoE systems.

\textbf{The Attention Part.} 
The attention component consists of a sequence of computational operations that include both GEMMs and attention. 
To illustrate this process, we consider the standard Multi-Head Attention~(MHA) layer as an example. In the $t$-th transformer layer, the input comprises hidden states $\mathbf{h}_t \in \mathbb{R}^{m_a \times S \times M}$. The total forward pass time for this layer, denoted $t_a(m_a)$, is a function of these dimensions and can be decomposed into the cumulative runtime of several GEMM and self-attention operations:
\begin{equation}
\begin{aligned}
t_a(m_a) =\ & 2 t_{{gm}}(m_a S M n_h d_k)  + 2 t_{{gm}}(m_a S M n_h d_v) \\
& + t_{{attn}}(m_a S^2 n_h (d_k + d_v)). \\
\end{aligned}
\label{eq:attn_detail}
\end{equation}
The coefficients 2 and 2 in the $t_{{gm}}$ terms account for the four linear projections required by the MHA operation: $Q$ and $K$ projections, and $V$ and Output ($O$) projections. Notably, other attention variants like MLA~\cite{deepseek_v2} can also be modeled using similar formulations involving $t_{{attn}}$ and $t_{{gm}}$, enabling unified analysis across various attention designs.

\textbf{The Shared Expert Part.}
The Shared Expert computation follows a structure similar to the attention layer, consisting of three primary linear projections: the gating projection, the up-projection, and the down-projection. For each expert $i$ ($1 \leq i \leq N_{{shared}}$), the gating and up-projections are represented by $W^{{gate}}_i$ and $W^U_i$, respectively, with dimensions $W^{{gate}}_i, W^U_i \in \mathbb{R}^{H \times M}$, while the down-projection is given by $W^D_i \in \mathbb{R}^{M \times H}$.

Each device in AG performs the shared expert transformations locally. The gating operation computes $\mathbf{z}^{{gate}}_{t,i} = W^{{gate}}_i \mathbf{h}_t$, the up-projection computes $\mathbf{z}^{u}_{t,i} = W^{U}_i \mathbf{h}_t$, and the down-projection computes $\mathbf{z}^{d}_{t,i} = W^{D}_i \text{Swish}(\mathbf{z}^{{gate}}_{t,i} \otimes \mathbf{z}^{u}_{t,i})$~\cite{swish},
where 
    $\text{Swish}(x) = \frac{x}{1+e^{-x}}$.
These operations result in outputs with dimensions $\mathbf{z}^{{gate}}_{t,i}, \mathbf{z}^{u}_{t,i} \in \mathbb{R}^{m_a \times S \times H}$ and $\mathbf{z}^{d}_{t,i} \in \mathbb{R}^{m_a \times S \times M}$, each taking $t_{{gm}}(m_a S M H)$ time.
The total computation time for the Shared Expert across $N_{{shared}}$ expert layers is the sum of all layers:
\begin{equation}
t_s(m_a) = 3N_{shared} t_{gm}(m_aSMH). \label{eq:shared_detail}    
\end{equation}


\textbf{The MoE Part.} 
The MoE layer employs conditional computation through a set of feed-forward networks, known as experts. The total number of $E$ experts is distributed across $eg$ devices. Each device is responsible for computing $E/eg$ distinct experts.
Each device in the expert group receives tokens, represented as $\mathbf{h}'_t \in \mathbb{R}^{(E/eg) \times m_e \times M}$, which are then partitioned along the first dimension into $E/eg$ slices: $\mathbf{h}'_{t,1}, \mathbf{h}'_{t,2}, \dots, \mathbf{h}'_{t,E/eg}$. Each slice, $\mathbf{h}'_{t,i} \in \mathbb{R}^{m_e \times M}$, is assigned to the corresponding local expert $i$. Here, $m_e$ denotes the number of tokens processed by a single expert. For each expert $i$, the computation involves a feed-forward network with weights including the up-projection $W^{U}_i \in \mathbb{R}^{H \times M}$, the gating projection $W^{{gate}}_i \in \mathbb{R}^{H \times M}$, and the down-projection $W^{D}_i \in \mathbb{R}^{M \times H}$, all of which reside on the assigned device.
The total computation time for each device is given by:
\begin{equation}
t_e(m_e) = 3(E / eg) t_{gm}(m_eMH).\label{eq:moe_detail}
\end{equation}


\textbf{A2E and E2A communication.} 
DEP employs two distinct communication operations: Attention-to-Expert~(A2E) and Expert-to-Attention~(E2A). We denote their respective communication times as $t_{a2e}$ and $t_{e2a}$. 
Since the communication workload is $z = E / eg \times m_e \times M$. We have:
\begin{equation}
t_{a2e}(m_e) = t_{c}(m_eEM/eg).\label{eq:a2e_detail}
\end{equation}
Due to the symmetric nature of communication in dual-workload topologies like PCIe or NVLink~\cite{pcie}, where data transfer occurs in different directions simultaneously, the time taken for A2E equals that for E2A, i.e., $t_{a2e}(m_e) = t_{e2a}(m_e)$.

\subsection{Optimization Problem Formulation}

For a given layer $t$, we define the timestamps that capture the start times of major computational and communication stages. Let $\tau_a^{(t, i)}$ represent the start time of the $i$-th attention segment within the $r_1$ pipeline, and $\tau_s^{(t, i)}$ the start time of the corresponding Shared Expert computation. The Expert computation within each pipeline segment is also divided into $r_2$ parts, and we denote the start time of the expert processing for the $i$-th $r_1$ slice and $j$-th $r_2$ token group as $\tau_e^{(t, i, j)}$. The communication timestamps are defined as $\tau_{a2e}^{(i, j)}$ and $\tau_{e2a}^{(i, j)}$, indicating the start times of the A2E and E2A communication phases, respectively.

Based on the above execution time formulations, we derive a set of timing constraints between key scheduling timestamps: $\tau_a^{(t, i)}$, $\tau_s^{(t, i)}$, $\tau_e^{(t, i, j)}$, $\tau_{a2e}^{(t, i, j)}$, and $\tau_{e2a}^{(t, i, j)}$. These constraints describe how different stages of computation and communication must be ordered to avoid conflicts and ensure data dependencies are satisfied. All constraints are represented as
\begin{equation}
\left\{
\begin{aligned} 
\tau_s^{(t', i')}, \tau_a^{(t', i')} &\notin [ \tau_a^{(t, i)}, \tau_a^{(t, i)} + t_a(m_a) ) \\
\tau_s^{(t', i')}, \tau_a^{(t', i')} &\notin [ \tau_s^{(t, i)}, \tau_s^{(t, i)} + t_s(m_a) ) \\
\tau^{(t',i', j')}_{a2e}, &\notin [ \tau^{(t,i,j)}_{a2e}, \tau^{(t,i,j)}_{a2e} + t_{a2e}(m_e)) \\
\tau^{(t',i', j')}_{e2a}, &\notin [ \tau^{(t,i,j)}_{e2a}, \tau^{(t,i,j)}_{e2a} + t_{e2a}(m_e)) \\
\tau^{(t',i', j')}_{e} &\notin [ \tau^{(t,i,j)}_{e}, \tau^{(t,i,j)}_{e} + t_{e}(m_e)) \\
\tau^{(t,i)}_s, \tau^{(t,i,j)}_{a2e} &\geq \tau^{(t,i)}_a + t_a(m_a) \\
\tau^{(t,i,j)}_e &\geq \tau^{(t,i,j)}_{a2e} + t_{a2e}(m_a) \\
\tau^{(t,i,j)}_{e2a} &\geq \tau^{(t,i,j)}_{e} + t_e(m_e) \\
\tau^{(t + 1,i)}_a &\geq \max (\tau^{(t,i,j)}_{e2a} + t_{e2a}(m_e), \tau^{(t,i)}_s + t_s(m_a)) \\
 m_e \cdot r_2 \cdot E &= m_a \cdot ag \cdot top_k \cdot S
\end{aligned}.
\right. \label{eq:limitation}
\end{equation}
The first five rules prevent different stages from using the same hardware at the same time. This avoids resource conflicts. Rules 6 to 9 ensure that each stage starts only after the previous one finishes in the same micro-batch. The final rule ensures that all data is processed accurately without any loss.

Our goal is to maximize the throughput of the disaggregated MoE pipeline by jointly optimizing the pipeline degrees and token partition sizes. This leads to the following optimization formulation:
\begin{equation}
\begin{aligned}
\max_{\substack{r_1, m_a \\ r_2, m_e}} 
& \frac{r_1 \cdot m_a \cdot ag}{\max ( \tau^{(T, r_1)}_s + t_s(m_a),\; \tau^{(T, r_1, r_2)}_{e2a} + t_{e2a}(m_e) )}  \\
\text{s. t.}  
& \quad \text{constraints in Eq.~\eqref{eq:limitation}}.  
\end{aligned}\label{eq:final-objective}
\end{equation}

For any fixed choice of \(r_1, m_a, r_2, m_e\), the remaining task is to assign start times \(\tau\) that satisfy the constraints in Eq.~\eqref{eq:limitation} and minimize the makespan \(\max\bigl(\tau_s^{(T, r_1)} + t_s(m_a),\;\tau_{e2a}^{(T, r_1, r_2)} + t_{e2a}(m_e)\bigr)\). This scheduling subproblem is a variant of the job‑shop problem: each operation (attention, shared, A2E, expert, E2A) runs on a dedicated machine (resources) and the operations of each micro‑batch follow the precedence graph implied by rules 6–9. It is well known that job‑shop scheduling is NP‑hard even with three machines. Our model involves four distinct resources (e.g., AG, EG, A2E, and E2A), therefore the subproblem is NP‑hard. Consequently, the overall problem, which additionally optimizes over the integer parameters \(r_1, m_a, r_2, m_e\), is also NP‑hard, because a polynomial‑time algorithm for the overall problem would yield a polynomial‑time solution for the NP‑hard subproblem by fixing those parameters appropriately.

\section{Solution}\label{sec:solution}

To solve the above problem, we need to explicitly determine the communication and computation times. Thus, we need to model the performance for a given communication or computation operation, so that we can predict their execution time with different sizes of input. In this section, we first build simple yet effective performance models for attention, GEMM computation, and A2E/E2A communication, then we derive the near-optimal solution to the problem of minimizing Eq.~\ref{eq:final-objective}.  

\subsection{Performance Models}









\textbf{Performance model of computation. }
Following \cite{shi2023pipemoe, gpu_launch, fsmoe}, we use a linear model (with bias) to represent computation. The model includes an intercept term ($\alpha_{gm}$) to account for fixed overheads, such as kernel launches and memory management, and a scaling factor ($\beta_{gm}$) to capture the increase in computational cost as the input size grows. The model is expressed as:
\begin{align}
    t_{gm}(x) &= \alpha_{gm} + \beta_{gm} x. \label{eq:gm} \\
    t_{attn}(y) &= \alpha_{attn} + \beta_{attn} y. \label{eq:attn} 
\end{align}

\textbf{Performance model of communication. }
For both A2E and E2A operations, the communication time can also be accurately described using a single $\alpha$-$\beta$ model. These operations are essentially reverse processes that share identical communication structures, allowing for a unified linear model. We define the communication time as:
\begin{equation}
t_{c}(z) = \alpha_{c} + \beta_{c} z \label{eq:a_r},
\end{equation}
where $z$ represents the input data size (bytes of elements communicated), $\alpha_{c}$ is the network startup time (overhead), and $\beta_{c}$ is the transmission time per byte, which is influenced by factors such as network bandwidth.

\textbf{Performance models of different layers. }
By substituting Eq.~\ref{eq:gm} and Eq.~\ref{eq:attn} into Eq.~\ref{eq:attn_detail}, we derive a simplified linear model for the performance of the MHA layer, expressed as: $t_a(m_a) = \alpha_a + \beta_a m_a$, where the new coefficients are defined as follows:
\begin{equation}
\alpha_a := 4 \alpha_{gm} + \alpha_{attn}
\end{equation}
and
\begin{equation}
\begin{aligned}
\beta_a := & \, \beta_{gm} ( 2 S M n_h d_k + 2S M n_h d_v) \\
           & + \beta_{attn} S^2 n_h (d_k + d_v) .
\end{aligned}
\end{equation}

For clarity and analytical tractability, we absorb all terms that do not vary with $m_a$ into the constants $\alpha_a$ and $\beta_a$. This yields a linear performance model that captures the contribution of $m_a$ to computation time.

Similarly, by substituting Eq.~\ref{eq:gm} into Eq.~\ref{eq:shared_detail}, we derive a simplified linear model for the performance of the shared expert layer, expressed as $t_s(m_a) = \alpha_s + \beta_s m_a$, where the new coefficients are defined as follows: $\alpha_s :=  3N_{shared} \alpha_{gm}$ and $\beta_s := 3N_{shared} \beta_{gm} SMH $.

Building on this methodology, the MoE layer's performance is modeled. Substituting Eq.~\ref{eq:gm} into Eq.~\ref{eq:moe_detail} and Eq.~\ref{eq:a_r} into Eq.~\ref{eq:a2e_detail}, and absorbing all terms not varying with $m_e$ into constants, we express its performance as $t_e(m_e) = \alpha_e + \beta_e m_e$. Here, $\alpha_e := (E / eg) \alpha_{gm}$ and $\beta_e := (E / eg) \beta_{gm} (MH)$. For A2E and E2A, the model is $t_{a2e}(m_e) = \alpha_{a2e} + \beta_{a2e} m_e$, where $\alpha_{a2e} := \alpha_c$ and $\beta_{a2e} := \beta_c EM/eg$. These linear models provide a tractable framework for analyzing and predicting the computational overhead of each architectural component, laying the groundwork for subsequent performance optimization. Although streamlined, this model effectively captures the dominant performance determinants of startup latency and workload-dependent scaling, and its fidelity is empirically validated in (\S\ref{sec:5.2}).

\subsection{Determine Task Order, $m_a$, $r_1$, $m_e$, and $r_2$.}


\begin{figure}[t]
    \centering

    \begin{subfigure}{0.9\linewidth}
        \centering 
        \includegraphics[width=\linewidth]{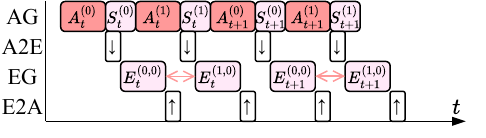}
        \caption{An example illustrating the limitations of \textit{ASAS}.}
        \label{fig:aass_e_as}
    \end{subfigure} \\
    
    \begin{subfigure}{0.9\linewidth}
        \centering
        \includegraphics[width=\linewidth]{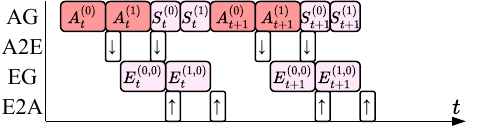}
        \caption{An example illustrating the advantages of \textit{AASS}.}
        \label{fig:asas_e_as}
    \end{subfigure} \\
    
    \begin{subfigure}{0.9\linewidth}
        \centering
        \includegraphics[width=\linewidth]{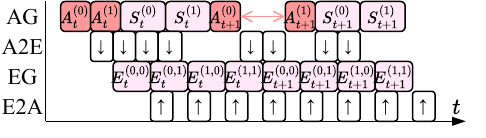}
        \caption{An example illustrating the limitations of \textit{AASS}.}
        \label{fig:aass_as_e}
    \end{subfigure} \\
    
    \begin{subfigure}{0.9\linewidth}
        \centering
        \includegraphics[width=\linewidth]{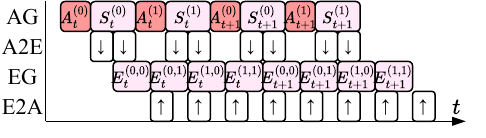}
        \caption{An example illustrating the advantages of \textit{ASAS}.}
        \label{fig:asas_as_e}
    \end{subfigure} 
    
    \caption{Comparative examples highlighting the advantages and limitations of \textit{AASS} and \textit{ASAS} scheduling strategies.}
    \label{fig:as}
\end{figure}

\textbf{Determine the order of Attention and Shared Expert.} We investigate the optimal execution order of attention and Shared Expert operations in AG by evaluating two primary scheduling strategies. The number of possible non-illness computing orders in a layer is given by $C(r_1 + r_1 - 1, r_1) = \frac{(2r_1 - 1)!}{(r_1!)((r_1 - 1)!)}$, which is cumbersome to verify one by one. However, we can observe that the advantages of more efficient computing are: (a) it allows for the earliest possible start of A2E communication, which helps utilize EG without idle time, and (b) it enables the use of AG (Attention Gate) without idle time.

We focus on the most representative strategies and explain why they are effective. The first, \textit{AASS} (Attention-All, Shared-All), processes all attention segments within the same layer before proceeding to all Shared Expert segments. The second, \textit{ASAS} (Attention-Shared-Alternating-Sequential), alternates between attention and Shared Expert operations.

As illustrated in Fig.~\ref{fig:as}, each schedule presents distinct advantages. The \textit{AASS} approach enables earlier initiation of A2E communication and expert computation, as evident when comparing Fig.~\ref{fig:aass_e_as} and Fig.~\ref{fig:asas_e_as}. Conversely, \textit{ASAS} improves GPU utilization by interleaving Shared Expert segments during periods in which attention-ready signals are pending, as shown in Fig.~\ref{fig:aass_as_e} and Fig.~\ref{fig:asas_as_e}.

To determine the better strategy, we independently identify the best-performing configuration for both \textit{AASS} and \textit{ASAS}. We then compare their performance outcomes to select the superior scheduling policy.

\textbf{Determine $m_a$.}
For illustrative purposes, we focus on optimizing the \textit{ASAS} scheduling strategy. The same methodology can be straightforwardly applied to \textit{AASS}.

Firstly, our optimization focuses on $r_1$ and $m_a$. Given a fixed execution order, we can iteratively compute the key timing variables: $\tau_a^{(t, i)}$, $\tau_s^{(t, i)}$, $\tau_e^{(t, i, j)}$, $\tau_{a2e}^{(t, i, j)}$, and $\tau_{e2a}^{(t, i, j)}$. 

We first examine the timing relationships within the 0‑th layer. Since the derivation for each variable follows a similar pattern, we use \(\tau_{e2a}^{(0, i, j)}\) as an illustrative example. This timestamp depends on the completion of \(i\)  pipeline chunks and additionally on the completion of \(j\) fine‑grained pipeline steps. Its value therefore decomposes into three components: an initial latency, the cumulative delay from the \(r_1\) pipeline, and the cumulative delay from the fine‑grained \(r_2\) pipeline, visualized in Fig.~\ref{fig:split_equ}.


\begin{figure}[t]
    \centering
    \includegraphics[width=0.9\linewidth]{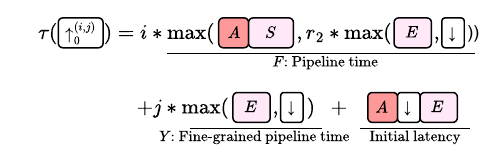}
    \caption{Diagram of the 0-th layer start timestamp $\tau_{e2a}^{(0, i, j)}$, decomposed into three components: pipeline time, fine-grained pipeline time, and initial latency.}
    \label{fig:split_equ}
\end{figure}

Proceeding similarly for all variables, we obtain the complete set of timing expressions for the 0‑th layer:
\begin{equation}
\left\{
\begin{aligned}
\tau_a^{(0, i)} &= i \cdot X(m_a) \\
\tau_s^{(0, i)} &= i \cdot X(m_a) + t_a(m_a) \\
\tau_{a2e}^{(0, i, j)} &= t_a(m_a) + i \cdot F(m_a, m_e) + j \cdot t_{a2e}(m_e) \\
\tau_e^{(0, i, j)} &= t_a(m_a) + t_{a2e}(m_e) + i \cdot F(m_a, m_e) + j \cdot Y(m_e) \\
\tau_{e2a}^{(0, i, j)} &= t_a(m_a) + t_{a2e}(m_e) + t_e(m_e) \\
&\quad + i \cdot F(m_a, m_e) + j \cdot Y(m_e)
\end{aligned},
\right. \notag
\end{equation}
where $X(m_a) = t_a(m_a) + t_s(m_a)$, $Y(m_e) = \max(t_e(m_e), t_{a2e}(m_e))$, and $F(m_a, m_e) = max(X(m_a), r_2 \cdot Y(m_e))$.

For the $t$-th layer, the timing variables $\tau_a^{(t, i)}$, $\tau_s^{(t, i)}$, $\tau_e^{(t, i, j)}$, $\tau_{a2e}^{(t, i, j)}$, and $\tau_{e2a}^{(t, i, j)}$ can be derived based on the corresponding variables from the $(t-1)$-th layer. Specifically, each of them is computed by adding an offset term to their respective $(t-1)$-th counterparts. This offset is given by:
$\max(G(m_a, m_e),\ r_1 \cdot F(m_a, m_e)),$
where  
\begin{equation}
\begin{aligned}
G(m_a, m_e) = & t_a(m_a) + t_{a2e}(m_e) + t_e(m_e) \\
 & + t_{a2e}(m_e) + (r_2 - 1) \cdot Y(m_e).    
\end{aligned}
\end{equation}
Here, $E(m_a, m_e)$ represents the time required to ensure that the GPUs in AG are idle and ready for the next attention segment, while $r_1 \cdot F(m_a, m_e)$ denotes the time required for the output of the expert computation on the 0-th chunk to be sent back. 

We can simplify the optimal objective defined in Eq.~\ref{eq:final-objective} as follows:
\begin{align}
\max_{\substack{r_1, m_a \\ r_2, m_e}} 
& \frac{r_1 \cdot m_a}{
\begin{aligned}
    & (T - 1) \max(G(m_a, m_e), r_1 F(m_a, m_e)) \\
    & + \max(X(m_a), G(m_a, m_e)) + (r_2 - 1) Y(m_e) \\
    & + (r_1 - 1) F(m_a, m_e)
\end{aligned}
}. \label{eq:simplify-objective}
\end{align}





To accelerate the search process, we first identify a crucial property: for a fixed value of $r_1$, the objective function defined in Eq.~\ref{eq:simplify-objective} increases monotonically with respect to $m_a$. To establish this, we employ a two-step proof. First, we demonstrate that for any given pair $(r_1, r_2)$, the objective function in Eq.~\ref{eq:simplify-objective} is monotonically increasing with respect to $m_a$. The detailed proof of this claim is presented below.

\begin{theorem}\label{thm:r_1_and_r_2}
Given pair $(r_1, r_2)$, the objective function in Eq.~\ref{eq:simplify-objective} is monotonically increasing with respect to $ m_a $ .
\end{theorem}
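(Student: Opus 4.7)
The plan is to reduce monotonicity of the throughput $r_1 m_a / D(m_a)$, where $D$ denotes the denominator of Eq.~\eqref{eq:simplify-objective}, to a statement about monotone decrease of the ratio $D(m_a)/m_a$, and then to exploit a uniform non-negative affine structure of $D$ in $m_a$. Because $r_1$ and $r_2$ are held fixed and the token-conservation constraint $m_e \cdot r_2 \cdot E = m_a \cdot ag \cdot top_k \cdot S$ forces $m_e$ to be a strictly positive scalar multiple of $m_a$, every primitive $t_a(m_a), t_s(m_a), t_e(m_e), t_{a2e}(m_e)$ becomes, after substitution, affine in $m_a$ with non-negative intercept and non-negative slope, inherited directly from the linear performance models of Eq.~\eqref{eq:gm}--\eqref{eq:a_r}. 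Consequently every composite quantity ($X, Y, F, G$) is a finite $\max$ or sum of non-negative affine functions of $m_a$.

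Next, I would unfold $D(m_a) = (T-1)\max(G, r_1 F) + \max(X, G) + (r_2-1)Y + (r_1-1)F$ into the canonical form
\[
D(m_a) = \sum_{i} \max_{j} (A_{ij} + B_{ij}\, m_a), \qquad A_{ij}, B_{ij} \geq 0.
\]
This expansion is routine algebra: distributing a non-negative scalar through a max preserves the max structure (since $c \cdot \max(u,v) = \max(cu, cv)$ for $c \geq 0$), and adding two max-of-affine expressions yields a max over pairwise sums. All scaling constants ($r_1, r_2, r_1 - 1, r_2 - 1, T - 1$) and all performance-model coefficients ($\alpha_{gm}, \beta_{gm}, \alpha_{attn}, \beta_{attn}, \alpha_c, \beta_c$) are non-negative, so the sign conditions on $A_{ij}, B_{ij}$ are preserved at every step of the flattening.

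Dividing through by $m_a > 0$ then gives
\[
\frac{D(m_a)}{m_a} = \sum_{i} \max_{j} \left( \frac{A_{ij}}{m_a} + B_{ij} \right).
\]
Each atomic term $A_{ij}/m_a + B_{ij}$ is non-increasing in $m_a$ because $A_{ij} \geq 0$. A pointwise maximum of non-increasing functions is non-increasing, and a finite sum of non-increasing functions is non-increasing, so $D(m_a)/m_a$ is non-increasing in $m_a$. Equivalently, $r_1 m_a / D(m_a)$ is non-decreasing in $m_a$, which is the claim.

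The main obstacle I anticipate is the bookkeeping required to justify the canonical expansion: $F$ appears nested inside $\max(G, r_1 F)$ while $G$ contains $Y$ additively and $F$ contains $Y$ through its $r_2 Y$ branch, so one must verify carefully that no sub-expression silently acquires a negative coefficient when flattening. A secondary subtlety is that $D$ is only piecewise linear in $m_a$, so monotonicity of $D(m_a)/m_a$ must be argued globally rather than piece by piece; the pointwise-max-of-decreasing-is-decreasing principle already delivers this global conclusion once the canonical form is in place, with continuity at breakpoints guaranteed by the piecewise-affine structure.
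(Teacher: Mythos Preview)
Your proposal is correct and follows essentially the same approach as the paper: both observe that the token-conservation constraint makes $m_e$ a positive multiple of $m_a$, so the denominator $D(m_a)$ becomes a sum and max of non-negative affine functions, and both then conclude that $m_a/D(m_a)$ is monotone. Your global argument via $D(m_a)/m_a = \sum_i \max_j (A_{ij}/m_a + B_{ij})$ being non-increasing is slightly tidier than the paper's piecewise analysis, which examines each linear segment separately and relies implicitly on continuity at the breakpoints.
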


\begin{proof}
To analyze the behavior of the objective function concerning $m_a$, we first establish a direct relationship between $m_e$ and $m_a$. From the constraint $m_a \cdot ag \cdot top_k \cdot S = m_e \cdot r_2 \cdot E$, we can express $m_e$ as a linear function of $m_a$. We have $m_e = k \cdot m_a$, where the constant $k = \frac{ag \cdot top_k \cdot S}{r_2 \cdot E}$.

The component functions $X(m_a)$, $Y(m_e)$, and $E(m_a, m_e)$ are defined as sums and maximums of the base linear performance models $t_a(m_a)$, $t_s(m_a)$, and $t_e(m_e)$. By substituting $m_e = k \cdot m_a$, each of these components becomes a linear or piecewise linear function of $m_a$. Specifically, the denominator of the objective function is constructed from additions and max operations on these functions. Since the sum of linear functions is linear, and the maximum of linear functions is piecewise linear and convex, the entire denominator is a positive, piecewise linear, and convex 
function of $m_a$.


Therefore, the objective function takes the form of $\frac{r_1 m_a}{D(m_a)}$, where $D(m_a)$ is the piecewise linear denominator. 
Within any linear segment of $D(m_a)$, the objective function can be written as $\frac{r_1 m_a}{\alpha_{total} + \beta_{total} m_a}$, where $\alpha_{total}$ and $\beta_{total}$ are positive constants aggregated from the underlying $\alpha$ and $\beta$ parameters of the performance models. To demonstrate its monotonic nature, we can rewrite the previous equation as $\frac{r_1}{\frac{\alpha_{total}}{m_a} + \beta_{total}}$. As $m_a$ increases, the term $\frac{\alpha_{total}}{m_a}$ decreases. This causes the denominator of the overall expression to decrease, which in turn increases the value of the function. Thus, the objective function is monotonically increasing concerning $m_a$ across each linear segment, and therefore, it is monotonically increasing for all $m_a > 0$.
\end{proof}

Next, we extend the result to show that for a fixed value of $r_1$, the objective function in Eq.~\ref{eq:simplify-objective} increases monotonically with respect to $m_a$. This follows directly from Theorem~\ref{thm:r_1_and_r_2}. The detailed proof of this generalized claim is provided below.

\begin{theorem}\label{thm:second_step}
Given $r_1$, the objective function in Eq.~\ref{eq:simplify-objective} is monotonically increasing with respect to $ m_a $.
\end{theorem}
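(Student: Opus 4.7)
The plan is to deduce Theorem~\ref{thm:second_step} from Theorem~\ref{thm:r_1_and_r_2} by observing that ``fixing only $r_1$'' amounts to taking the pointwise maximum of a family of functions indexed by $r_2$, each of which has already been shown to be monotonically increasing in $m_a$. Concretely, let $g(m_a;\, r_1, r_2)$ denote the value of the objective in Eq.~\ref{eq:simplify-objective} with $m_e$ eliminated via the conservation constraint $m_e \cdot r_2 \cdot E = m_a \cdot ag \cdot top_k \cdot S$. Then the objective as a function of $m_a$ alone, given $r_1$, is $h(m_a;\, r_1) := \max_{r_2} g(m_a;\, r_1, r_2)$, where the maximum ranges over the admissible positive integers $r_2$.

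The key step is a one-line monotonicity argument. By Theorem~\ref{thm:r_1_and_r_2}, for each fixed $r_2$ the map $m_a \mapsto g(m_a;\, r_1, r_2)$ is monotonically increasing. Hence, for $m_a < m_a'$ and every admissible $r_2$, we have $g(m_a;\, r_1, r_2) \leq g(m_a';\, r_1, r_2) \leq h(m_a';\, r_1)$. Taking the maximum over $r_2$ on the left yields $h(m_a;\, r_1) \leq h(m_a';\, r_1)$. That is, the pointwise maximum of a family of monotonically increasing functions is itself monotonically increasing, which establishes the claim.

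The main obstacle I anticipate is bookkeeping for the integrality of $r_2$ and the resulting feasible set $\mathcal{R}(m_a)$ of valid $r_2$ values (those for which $m_e$ is a positive integer under the conservation constraint). Since this feasible set may shift as $m_a$ varies, the pointwise-maximum step above strictly requires that any $r_2$ attaining $h(m_a;\, r_1)$ remain admissible at $m_a'$. I would address this either by restricting attention to a divisibility-friendly grid of $m_a$ values (which is exactly what the practical search procedure enumerates), or by temporarily relaxing $m_e$ to a nonnegative real variable, so that the linear performance models from Eq.~\ref{eq:gm}--Eq.~\ref{eq:a_r} extend continuously and the integrality complications disappear. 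Under either convention, the monotonicity argument goes through verbatim and closes the proof.
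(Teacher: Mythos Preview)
Your proposal is correct and follows essentially the same approach as the paper: the paper also picks the optimizing $r_2^*$ at $m_a$, invokes Theorem~\ref{thm:r_1_and_r_2} to get a larger value at $m_a'$ with the same $r_2^*$, and then bounds this by the maximum over $r_2$ at $m_a'$. Your phrasing as ``the pointwise maximum of monotonically increasing functions is monotonically increasing'' is a clean abstraction of that argument, and your discussion of the integrality/feasibility issue for $r_2$ is additional care that the paper omits.
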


\begin{proof}
Consider any arbitrary value $m_a$ and the corresponding pair $(r_1, r_2^*)$, where
$$
r_2^* = \arg \max_{r_2} \frac{r_1 \cdot m_a}{\max ( \tau_s^{(T, r_1)} + t_s(m_a), \; \tau_{e2a}^{(T, r_1, r_2)} + t_{e2a}(m_e) )},
$$
according to Theorem~\ref{thm:r_1_and_r_2}, for any $m_a' > m_a$, the following inequality holds:
\begin{equation}
    \begin{aligned}
        & \frac{r_1 \cdot m_a'}{\max ( \tau_s^{(T, r_1)} + t_s(m_a'), \; \tau_{e2a}^{(T, r_1, r_2^*)} + t_{e2a}(m_e') )} \\
        > &  \max_{r_2} \frac{r_1 \cdot m_a}{\max ( \tau_s^{(T, r_1)} + t_s(m_a), \; \tau_{e2a}^{(T, r_1, r_2)} + t_{e2a}(m_e) )}. \notag
    \end{aligned}
\end{equation}
It implies that
\begin{equation}
    \begin{aligned}
        & \frac{r_1 \cdot m_a'}{\max ( \tau_s^{(T, r_1)} + t_s(m_a'), \; \tau_{e2a}^{(T, r_1, r_2^*)} + t_{e2a}(m_e') )} \\
        \leq & \max_{r_2} \frac{r_1 \cdot m_a'}{\max ( \tau_s^{(T, r_1)} + t_s(m_a'), \; \tau_{e2a}^{(T, r_1, r_2)} + t_{e2a}(m_e') )}. \notag
    \end{aligned}
\end{equation}
Consequently, the objective function increases monotonically as $m_a$ increases, which completes the proof.
\end{proof}

\begin{figure*}[t]
    \centering
    \includegraphics[width=0.9\linewidth]{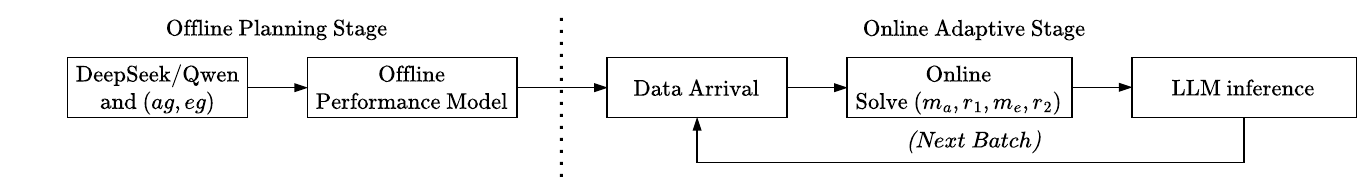}
    \caption{The pipeline of FinDEP, which consists of an offline planning phase and an online adaptive phase.}
    \label{fig:pipeline}
\end{figure*}

\textbf{Determine $r_1$.}
    We now turn our attention to analyzing the behavior of the objective function with respect to the parameter $r_1$. Specifically, we aim to demonstrate that, for a fixed value of $m_a$, the objective function defined in Eq.~\ref{eq:simplify-objective} is monotonically non-decreasing with respect to $r_1$. To establish this result, we adopt a two-step proof strategy analogous to that used in prior analysis.

The first step involves showing that, for any fixed pair $(m_a, r_2)$, the objective function increases or remains constant as $r_1$ increases. The second step mirrors the approach in Theorem~\ref{thm:second_step}. However, due to space constraints, we omit this proof from the paper. In what follows, we focus on formally proving the first step.

\begin{theorem}
Given $(m_a, r_2)$, the objective function in Eq.~\ref{eq:simplify-objective} is monotonically non-decreasing with respect to $r_1$.
\end{theorem}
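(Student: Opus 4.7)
The plan is to fix $m_a$ and $r_2$, so that the constraint $m_e \cdot r_2 \cdot E = m_a \cdot ag \cdot top_k \cdot S$ pins down $m_e$, making $X(m_a)$, $Y(m_e)$, $F(m_a,m_e)$, and $G(m_a,m_e)$ all constants in $r_1$. Writing $K := \max(X, G) + (r_2 - 1) Y$ and suppressing the fixed arguments, the objective becomes $f(r_1) = r_1 m_a / D(r_1)$ with $D(r_1) = (T - 1)\max(G, r_1 F) + K + (r_1 - 1) F$. Since the numerator is linear in $r_1$, monotonicity of $f$ is equivalent to $D(r_1)/r_1$ being non-increasing.

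The key step I would prove first is the inequality $K \geq F$. Expanding $G = t_a(m_a) + 2 t_{a2e}(m_e) + t_e(m_e) + (r_2 - 1) Y$ and $Y = \max(t_e(m_e), t_{a2e}(m_e))$, nonnegativity of the summands gives $G \geq t_e(m_e)$ and $G \geq 2 t_{a2e}(m_e) \geq t_{a2e}(m_e)$, hence $G \geq Y$. Therefore $\max(X, G) \geq Y$, yielding $K \geq r_2 Y$; combined with the trivial $K \geq X$, one obtains $K \geq \max(X, r_2 Y) = F$.

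I would then split into the two regimes induced by the outer $\max$. When $r_1 F \leq G$, one has $D(r_1)/r_1 = F + ((T - 1)G + K - F)/r_1$, whose additive coefficient is nonnegative by $G \geq 0$ and $K \geq F$, so the quotient is non-increasing in $r_1$. When $r_1 F > G$, the denominator telescopes to $T F \cdot r_1 + (K - F)$, giving $D(r_1)/r_1 = T F + (K - F)/r_1$, again non-increasing by $K \geq F$ alone. Continuity of $D$ at the breakpoint $r_1 = G/F$ (both branches collapse to $T F + F(K - F)/G$) then stitches the pieces into a globally non-increasing function of $r_1$, so $f(r_1)$ is globally non-decreasing; restricting to integer $r_1$ gives the claimed discrete monotonicity.

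The main obstacle is the inequality $K \geq F$, because the subsequent case split is essentially mechanical once it is in hand. The inequality depends on the specific algebraic form of $G$ coming from the ASAS schedule: without the two $t_{a2e}$ contributions and the $t_e$ contribution inside $G$, the bound $G \geq Y$ could fail, and monotonicity in $r_1$ would no longer be guaranteed. A careful reading of how $G$ enters the layer-transition offset $\max(G, r_1 F)$ is therefore the critical conceptual step.
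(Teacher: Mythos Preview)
Your proposal is correct and follows essentially the same approach as the paper: both reduce to showing that the constant term in the piecewise-linear denominator is nonnegative (the paper calls this $C \geq 0$), and both prove it via the same key inequality $G \geq Y$, then $K \geq r_2 Y$ and $K \geq X$, hence $K \geq F$. Your explicit two-regime case split and continuity check at $r_1 = G/F$ is slightly more careful than the paper's ``in each linear segment'' framing, but the core lemma and overall structure coincide.
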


\begin{proof}
To analyze monotonicity with respect to $r_1$, observe that the objective function takes the form $\frac{r_1 m_a}{D(r_1)}$, where the denominator $D(r_1)$ is piecewise linear in $r_1$. When expressed as $D(r_1) = B r_1 + C$ in each linear segment ($B > 0$), monotonicity depends critically on the sign of the constant term $C$. 
We demonstrate $C \geq 0$, where
\begin{equation}
C = \max(X(m_a), G(m_a, m_e) + (r_2 - 1)Y(m_e)) - F(m_a, m_e).
\end{equation}
First, from the inequality $E(m_a, m_e) \geq Y(m_e)$, we derive:
\begin{equation}
G(m_a, m_e) + (r_2 - 1)Y(m_e) \geq r_2 Y(m_e).
\end{equation}
Consequently, we have:
\begin{equation}
\begin{aligned}
 &\max(X(m_a), G(m_a, m_e) + (r_2 - 1)Y(m_e)) \\
 \geq & \max(X(m_a), r_2 Y(m_e)).
\end{aligned}
\end{equation}
Since $F(m_a, m_e) = \max(X(m_a), r_2 Y(m_e))$, it follows directly that $C \geq 0$.
With $A = m_a > 0$, $B > 0$, and $C \geq 0$, the objective function becomes $\frac{A r_1}{B r_1 + C}$. Its derivative is:
\[
\frac{d}{d r_1} ( \frac{A r_1}{B r_1 + C} ) = \frac{A(B r_1 + C) - A r_1 B}{(B r_1 + C)^2} = \frac{A C}{(B r_1 + C)^2} \geq 0,
\]
since $A > 0$, $C \geq 0$, and the denominator is positive. Therefore, the objective function is monotonically non-decreasing in $r_1$.
\end{proof}

\textbf{Determine $r_2$ and $m_e$.}
The final parameters to verify are \( r_2 \) and \( m_e \). Given \( m_a \) and \( r_1 \), we have only one free variable, as the other is constrained by the relation:  \( m_a\cdot ag \cdot top_k \cdot S / E = m_e \cdot r_2. \)  
To simplify, we express \( m_e(1 / r_2) = (m_a ag \text{top}_k S) / (E \cdot r_2) = k' / r_2 \)  
thereby reducing the problem to solving for \( r_2 \) alone. Fortunately, the objective function is convex with respect to \( 1 / r_2 \).  
\begin{theorem}\label{thm:three_step}  
Given \( r_1 \) and \( m_a \), the objective function in Eq.~\ref{eq:simplify-objective} is convex with respect to \( 1/r_2 \).  
\end{theorem}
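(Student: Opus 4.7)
The plan is to substitute $u = 1/r_2$ and use the constraint $m_e \cdot r_2 \cdot E = m_a \cdot ag \cdot top_k \cdot S$ to write $m_e = k' u$ with the positive constant $k' := m_a \cdot ag \cdot top_k \cdot S / E$. With $r_1$ and $m_a$ fixed, the numerator $r_1 m_a$ is constant, so convexity of the throughput in $u$ (in the sense used by the authors, i.e., the reciprocal makespan is convex, so that minimizing the denominator is a one-dimensional convex problem) reduces to showing the denominator $D(u)$ of Eq.~\eqref{eq:simplify-objective} is convex in $u$ on $u \in (0, 1]$, i.e., for $r_2 \geq 1$. I would then analyze the primitive building blocks $Y$, $r_2 Y$, $F$, $G$ one at a time and combine them using the standard closure properties of convex functions.

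The primitives go as follows. Since $t_e(m_e)$ and $t_{a2e}(m_e)$ are linear in $m_e$ with positive slopes and positive intercepts, each becomes an affine function of $u$, so $Y(m_e) = \max(t_e, t_{a2e})$ is a maximum of affine functions and hence convex in $u$. For $r_2 Y(m_e) = Y(k'u)/u$, each underlying branch has the form $(\alpha_i + \beta_i k' u)/u = \alpha_i/u + \beta_i k'$, which is convex on $u > 0$ because $\alpha_i/u$ is convex; taking the pointwise maximum across branches preserves convexity. Then $F = \max(X(m_a), r_2 Y)$ is the maximum of a constant and a convex function, hence convex, and the remaining $G$ contains $t_a$ (constant), affine terms $2 t_{a2e} + t_e$ in $u$, and the term $(r_2 - 1) Y$. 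Finally, $D$ is assembled as a nonnegative linear combination of $\max(G, r_1 F)$, $\max(X, G)$, $(r_2-1)Y$, and $(r_1-1)F$; once each summand is shown convex, closure under sum and max closes the argument.

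The main obstacle is $(r_2 - 1) Y(m_e) = (1/u - 1)\, Y(k'u)$, where a decreasing factor $(1/u - 1)$ multiplies an increasing max of affines. I would handle this by first writing it, on each affine piece of $Y$, as
\begin{equation*}
(1/u - 1)(\alpha_i + \beta_i k' u) = \alpha_i/u + \beta_i k' - \alpha_i - \beta_i k' u,
\end{equation*}
whose second derivative is $2\alpha_i/u^3 > 0$, so it is convex on each piece. For the passage across a crossover point $u^\star$ between two active branches $i=1, 2$, convexity requires the right derivative to dominate the left derivative; a short calculation reduces this to $(\beta_2 - \beta_1)\, k'\,(1/u^\star - 1) \geq 0$, which holds exactly when $r_2 \geq 1$ (the regime we are in). Because $(1/u - 1) \geq 0$ on this range, the max commutes with the multiplication, so $(r_2 - 1) Y$ is globally the max of convex pieces and therefore convex.

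Putting the pieces together, $D(u)$ is a sum of maxima of convex functions, hence convex in $u = 1/r_2$ on $(0, 1]$. The throughput-maximization problem in $r_2$ then becomes minimization of a one-dimensional convex function, which can be solved by ternary or golden-section search in $O(\log(1/\epsilon))$ evaluations; paired with the monotonicity results for $m_a$ and $r_1$ from Theorems~\ref{thm:r_1_and_r_2} and~\ref{thm:second_step}, this gives the overall polynomial-time solver promised in Section~\ref{sec:solution}.
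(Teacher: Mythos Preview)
Your proposal is correct and follows essentially the same route as the paper: substitute $u=1/r_2$ so that $m_e=k'u$, reduce to minimizing the denominator, show the primitives $Y$, $r_2 Y$, $F$, $G$ are convex in $u$, and close via max/sum preservation of convexity. The one place you go beyond the paper is the term $(r_2-1)Y(m_e)$: the paper simply asserts it is ``a maximum of convex functions'' without justification, whereas you explicitly expand each branch, verify the second derivative, and (somewhat redundantly, given your observation that $(1/u-1)\geq 0$ lets the max commute with the scalar) check the derivative jump at the crossover point---this extra care fills a gap the paper leaves implicit but does not change the overall strategy.
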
  


\begin{proof}
To optimize the objective function in Eq.~\ref{eq:simplify-objective}, we express it as the following equivalent form:
\begin{align}
\min_{r_2} & ( (T - 1) \max(G(m_a, m_e), r_1 F(m_a, m_e)) ) \notag \\
& + \max(X(m_a), G(m_a, m_e)) + (r_2 - 1) Y(m_e) \notag \\
& + (r_1 - 1) F(m_a, m_e) \label{equ:solver_r2}.
\end{align}
We aim to prove the convexity of this objective function. Specifically, we need to verify the convexity of the term $(r_2 - 1) Y(m_e)$ and the product $r_2(m_e) Y(m_e)$ within $F(m_a, m_e)$.
The performance models $t_e(1/r_2) = \alpha_e + \beta_e k' / r2$ and $t_{a2e}(1/r_2) = \alpha_{a2e} + \beta_{a2e} k' / r_2 $ are linear functions of $1/r_2$. Since their coefficients are positive, these functions are convex and monotonically increasing. We define $Y(1/r_2) = \max(t_e(1/r_2), t_{a2e}(1/r_2))$. The maximum of linear functions is piecewise linear and convex, and since both are increasing, $Y(1/r_2)$ is non-decreasing, preserving convexity.

Next, the product $r_2 \cdot Y(m_e)$ is the maximum of terms of the form $\alpha r_2 + k'\beta$, which are convex for $r_e > 0$. Hence, $r_2 \cdot Y(m_e)$ is convex.

The function $G(m_a, m_e)$ includes terms like $t_e(1/r_2)$, $t_{a2e}(1/r_2)$, and $(r_2 - 1) Y(1/r_2)$. The linear terms are convex, and $(r_2 - 1) Y(m_e)$, which is a maximum of convex functions, is also convex. Thus, $G(m_a, m_e)$ is convex.

The objective function is the sum of three terms: $(r_1 - 1) F(m_a, m_e)$, which is convex since $r_1 \geq 1$ and $F(m_a, m_e)$ is convex; $(T - 1) \max(G(m_a, m_e), r_1 F(m_a, m_e))$, which is convex since both $G(m_a, m_e)$ and $r_1 F(m_a, m_e)$ are convex, and $T \geq 1$; and $\max(X(m_a), G(m_a, m_e))$, which is convex because it is the maximum of a constant and a convex function. Since the sum of convex functions is convex, the entire objective function is convex with respect to $1/r_2$.
\end{proof}

\begin{algorithm}[t]
\caption{FinDEP Configuration Search}
\label{alg:optimal_config}
\begin{algorithmic}[1]
\Require
    $P, ag, eg, \alpha_*, \beta_*, B, S, H, M, N_{\text{shared}}, E, top_k, T$
\Ensure
    $\text{best\_config} = (m_a, r_1, m_e, r_2, \text{order})$

\State $\text{best\_tps} \gets 0$
\State $\text{best\_config} \gets \emptyset$

\State $r_1' \gets 0$ \Comment{Previous $r_1$}
    
\For{$m_a = \infty$ \textbf{downto} $1$} 
    \State $r_1 \gets \text{getMaxR1}(ag, eg, m_a, P, B, S, H, M, N_{\text{shared}}, E, top_k, T)$ \Comment{Memory-constrained}
    
    \If{$r_1 == 0$ or $r_1 == r_1'$}
        \State \textbf{continue} \Comment{Skip non-Pareto-optimal $(m_a, r_1)$}
    \EndIf
    
    \For{$\text{order} \in \{ASAS, AASS\}$} \Comment{Evaluate both execution orders}
        \State $r_2^*,\ \text{tps} \gets \text{Solve}( \min_{r_2} \text{Eq.~\ref{equ:solver_r2}} )$ \Comment{Returns optimizer and optimal value}
        \State $m_e \gets \frac{m_a \cdot ag \cdot top_k \cdot S}{r_2^* \cdot E}$
        
        \If{$\text{tps} > \text{best\_tps}$}
            \State $\text{best\_tps} \gets \text{tps}$
            \State $\text{best\_config} \gets ( m_a, r_1, m_e, r_2^*, \text{order})$
        \EndIf
    \EndFor
    
    \State $r_1' \gets r_1$
\EndFor

\State \Return $\text{best\_config}$
\end{algorithmic}
\end{algorithm}

\subsection{Algorithm}

Based on the previous analysis, we propose an efficient algorithm to find the near-optimal configuration for $r_1$, $m_a$, $r_2$, and $m_e$, as shown in Algorithm~\ref{alg:optimal_config}. Given a computing order, the algorithm provides the optimal configuration for that order, focusing on maximizing inference throughput.
Specifically, we focus on the Pareto frontier of $(m_a, r_1)$ under memory constraints, respecting the monotonicity of $m_a$ and $r_1$. 
The algorithm iterates over $m_a$ in descending order and calculates the maximum allowable $r_1$ based on memory limits. We skip configurations with the same $r_1$ as the previous iteration to avoid redundancy.

For each unique $(m_a, r_1)$ pair, the algorithm evaluates two execution orders: $\textit{ASAS}$ and $\textit{AASS}$. For each order, we solve a convex optimization problem to find the optimal $r_2$ that maximizes the objective in Eq.~\ref{eq:simplify-objective}. Then, we calculate $m_e$ as:  $m_e = m_a \cdot ag \cdot top_k \cdot S / (r_2 \cdot E)$. The configuration with the highest throughput is returned. This approach efficiently explores the search space and eliminates suboptimal configurations. With the near-optimal solution derived from Algorithm~\ref{alg:optimal_config}, we obtain the fine-grained task schedule in FinDEP for MoE inference.

\textbf{Complexity Analysis}
Our complexity analysis is divided into two steps: first, determining the number of possible $(r_1, m_a)$ positions on the Pareto frontier, and second, analyzing the time spent on convex optimization. Since the memory constraint is $r_1 \cdot m_a \leq M$, where $M$ is the largest micro-batch size that the GPU can hold, the number of distinct values of $m_a = \lfloor \frac{M} {r_1}\rfloor$ corresponds to the number of divisors of $M$, denoted as $d(M)$. The number of divisors grows at most as $O(\sqrt{M})$. Since convex optimization is performed for a single parameter $r_2$, the solver operates quickly. Assuming constant optimization time, denoted as $C$, the overall complexity is $O(C \cdot d(M))$. Given the fast nature of the solver, the inference time is almost unaffected by this process(\S\ref{sec:evaluation}).

\textbf{Online Pipeline} 
Fig.~\ref{fig:pipeline} illustrates our system pipeline, which is bifurcated into offline and online phases. The offline phase handles initialization: we first select the serving model (e.g., DeepSeek or Qwen) and determine the sizes of the Attention Group and Expert Group ($ag, eg$). Subsequently, we utilize an offline performance model to collect the necessary model coefficients and hardware parameters, which serve as inputs for the optimization solver.

The online phase addresses runtime adaptation. As input data shapes are unknown prior to request arrival, configuration decisions must be made in real-time. Upon data arrival, the system executes the lightweight Algorithm~\ref{alg:optimal_config} to rapidly derive the optimal configuration $(m_a, r_1, m_e, r_2, \text{order})$. This approach allows FinDEP to dynamically adapt to varying workloads, achieving superior speedup ratios compared to static settings.
\section{Evaluation}\label{sec:evaluation}

\subsection{Testbeds}
Our experiments leverage four distinct hardware testbeds. Testbed A uses a single node with eight NVIDIA A6000 GPUs, while Testbed B is configured with eight NVIDIA A10 GPUs. Testbed C also employs a single node, equipped with eight NVIDIA H20 GPUs, and Testbed D scales this configuration across four nodes, each containing eight H20 GPUs. Further details regarding the server configuration can be found in Table~\ref{tab:server-config}. Our software environment runs on Ubuntu 22.04, with Python 3.10, CUDA 11.3, PyTorch 2.4, and NCCL 2.27.5.  We implement attention using FlashInfer 0.3.0~\cite{flashinfer}.We implement Attention-to-Expert and Expert-to-Attention transfer atop NCCL.

\begin{table*}[]
	\centering
 \addtolength{\tabcolsep}{-3pt}
 		\caption{The server configurations in our testbeds.}
		\label{tab:server-config}
\begin{tabular}{lllll}
\toprule
            \textbf{Name}    & \textbf{Testbed A}  &  \textbf{Testbed B}    &  \textbf{Testbed C}    &  \textbf{Testbed D}                                                      \\ \midrule 
            Memory & 48GB & 24GB & 96GB & 96GB \\     
\multicolumn{1}{l}{GPU}     & \multicolumn{1}{l}{8x Nvidia RTXA6000 }                                                    & \multicolumn{1}{l}{8x Nvidia A10}   &   \multicolumn{1}{l}{8x Nvidia H20}   &   \multicolumn{1}{l}{32x Nvidia H20}                      \\ 
\multicolumn{1}{l}{Architecture}     & \multicolumn{1}{l}{Ampere }                                                    & \multicolumn{1}{l}{Ampere}   &   \multicolumn{1}{l}{Hopper}   &   \multicolumn{1}{l}{Hopper}                      \\ 
\multicolumn{1}{l}{Boost Clock}     & \multicolumn{1}{l}{1.46GHz}                                                    & \multicolumn{1}{l}{1.41GHz}   &   \multicolumn{1}{l}{1.98GHz}      &   \multicolumn{1}{l}{1.98GHz}                     \\ 
\multicolumn{1}{l}{NVlink}  & \multicolumn{1}{l}{YES}                                                             & \multicolumn{1}{l}{NO}  & \multicolumn{1}{l}{YES}      & \multicolumn{1}{l}{YES}                                                             \\ 
\multicolumn{1}{l}{PCIe}    & \multicolumn{1}{l}{4.0 (x16)}                                                             & \multicolumn{1}{l}{4.0 (x16)}  & \multicolumn{1}{l}{4.0 (x16)}   & \multicolumn{1}{l}{4.0 (x16)}                                                         \\ \bottomrule
\end{tabular}
\end{table*}

\subsection{Verification of Performance Models}~\label{sec:5.2}



\begin{figure}[t]
    \centering

    \begin{subfigure}{0.47\linewidth}
        \centering
        \includegraphics[width=\linewidth]{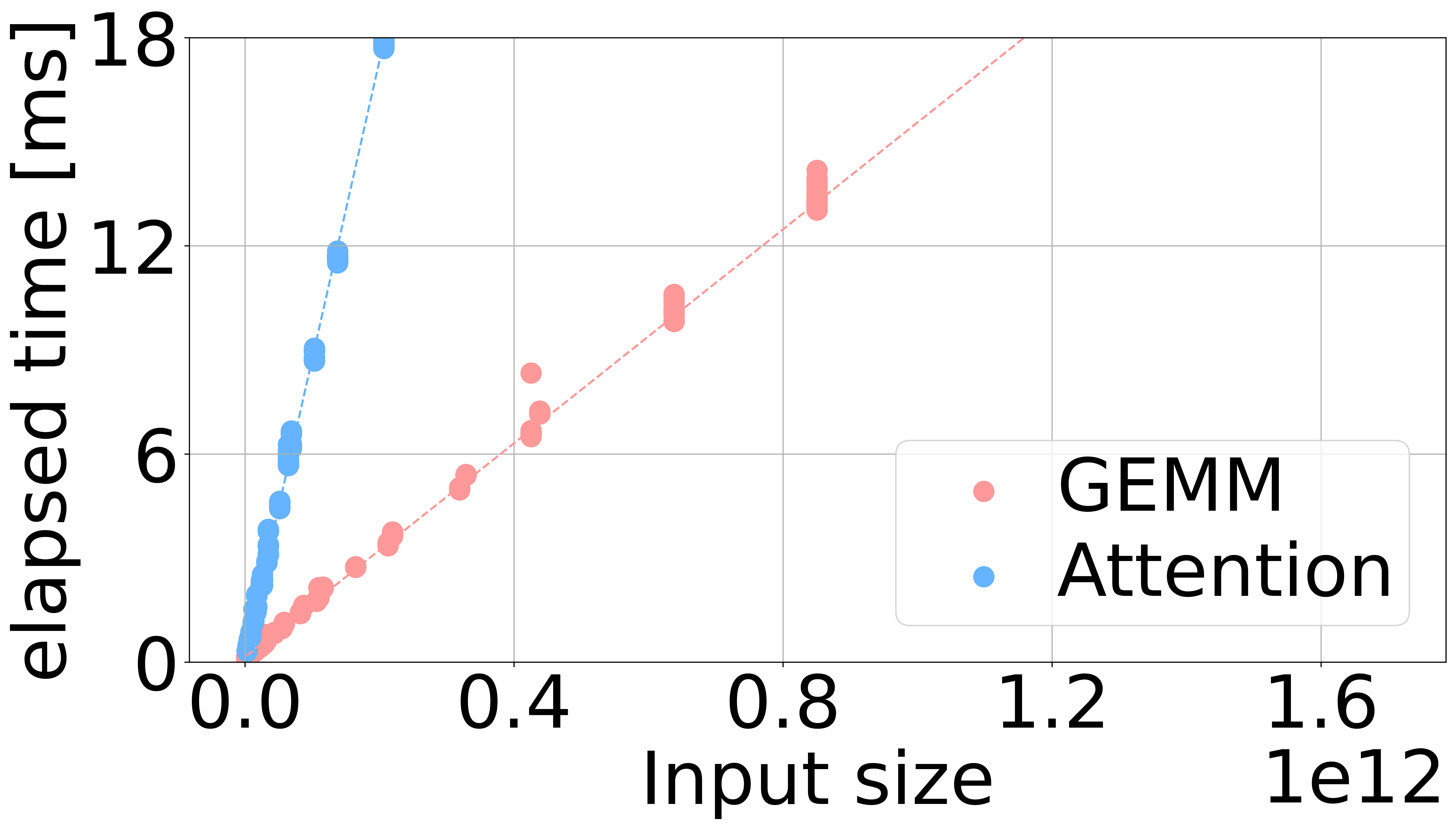}
        \caption{computing on A6000.}
        \label{fig:comp_performance}
    \end{subfigure}
    \begin{subfigure}{0.47\linewidth}
        \centering
        \includegraphics[width=\linewidth]{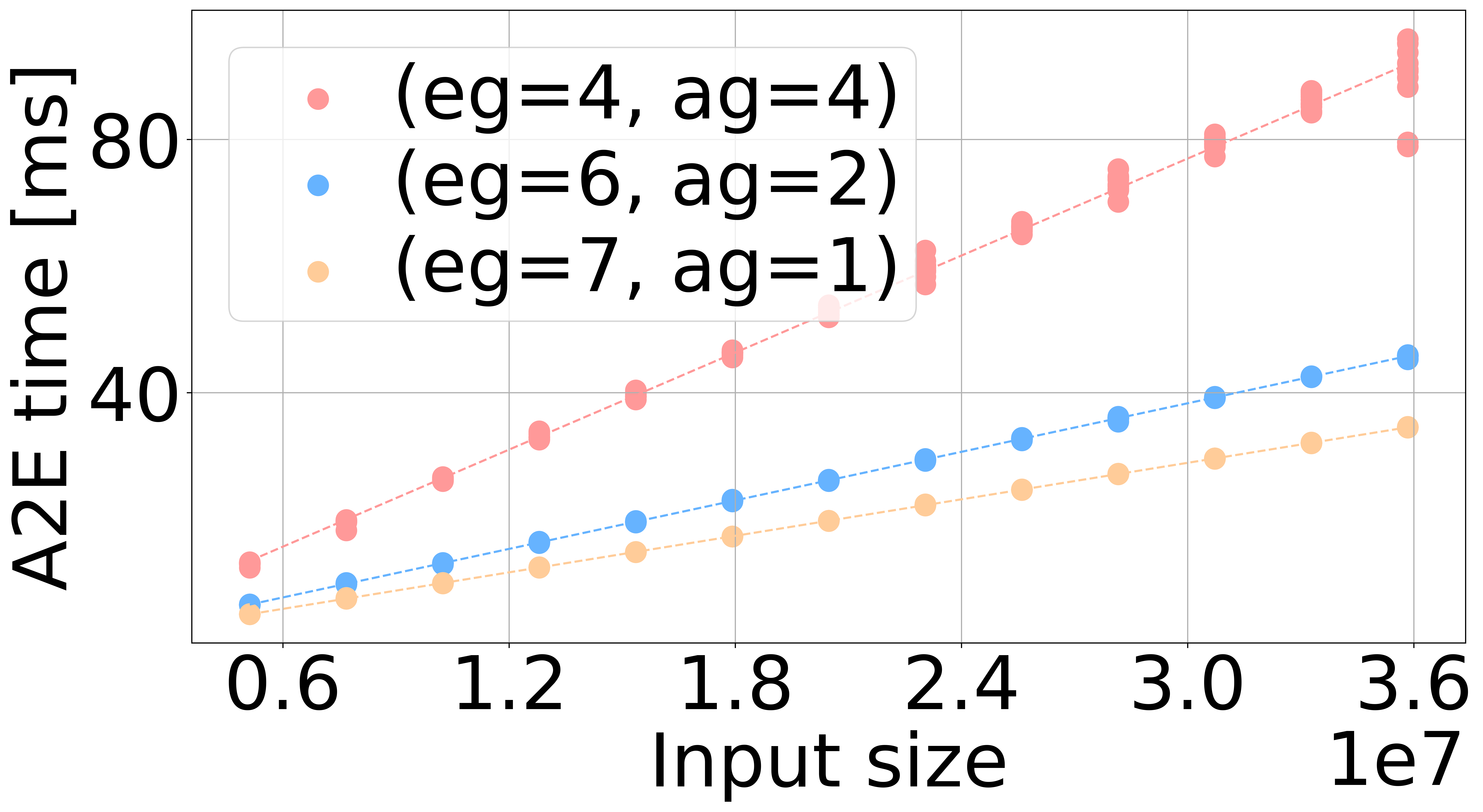}
        \caption{communication on A6000.}
        \label{fig:comm_performance}
    \end{subfigure}
    \caption{Performance models for GEMM, Attention, and communication. Markers represent measured values, while the lines correspond to predicted values with estimated parameters. (a) $\alpha_{gm} = 0.17$ and $\beta_{gm} = 8.59 \times 10^{-11}$. $\alpha_{attn} = 0.15$ and $\beta_{attn} = 1.54 \times 10^{-11}$.  (b) $(\alpha_{a2e}, \beta_{a2e})$ take the following values: $(0.10, 9.61 \times 10^{-7})$ for $(eg = 7, ag = 1)$, $(0.01, 1.28 \times 10^{-6})$ for $(eg = 6, ag = 2)$, and $(0.37, 2.55 \times 10^{-6})$ for $(eg = 4, ag = 4)$.}
    \label{fig:performance}
\end{figure}

We conduct micro-benchmarks to determine the values of $\alpha_{gm}$, $\beta_{gm}$, $\alpha_{attn}$, $\beta_{attn}$, $\alpha_{a2e}$ and $\beta_{a2e}$ before solving the algorithm. For the GEMM component, we test a range of matrix configurations across all matrix sizes encountered in the MLA. This comprehensive testing ensures our model can effectively handle varying configurations. The results, shown in Fig.~\ref{fig:comp_performance}, yield an $R^2$ value of 0.997132.

For the communication component, we separately compute $\alpha_{a2e}$ and $\beta_{a2e}$ for different $ag$ and $eg$ settings, as these parameters are interdependent. Since the time for Expert-to-Attention matches that for Attention-to-Expert, we do not need to rerun the micro-benchmark for the former case. The results, shown in Fig.~\ref{fig:comm_performance}, yield $R^2$ values of 0.999986, 0.999911, and 0.994018, indicating a strong fit. This demonstrates that simple linear models can accurately predict execution time, consistent with findings in prior work on performance modeling~\cite{shi2023pipemoe, gpu_launch, fsmoe}.

We run 30 trials per data point: 10 for warm-up and 20 for statistics. The full micro-benchmark, including Attention, GEMM, and communication steps, takes under 2 minutes.

\subsection{Monotonicity of Throughput with Respect to $m_a$ and $r_1$}

Our analysis reveals a key monotonic relationship: under per-parameter optimization, throughput increases monotonically with respect to \(m_a\) and \(r_1\). Specifically, for a given model and a fixed \((a_g, e_g)\) configuration, if the value of \(m_a\) is held constant, throughput increases as \(r_1\) increases, provided that the \((m_e, r_2)\) pair and computation order are optimized for each specific value of \(r_1\). Conversely, the same monotonic increase holds for \(m_a\) when \(r_1\) is fixed and \((m_e, r_2)\) and the computation order are optimized accordingly.

In this experiment, for each \((m_a, r_1)\) pair, we performed a brute-force search over all \((m_e, r_2)\) values and computation orders to determine the optimal throughput. To accelerate testing, we used a smaller variant of DeepSeek-V2 236B~\cite{deepseek_v2}, keeping all other hyperparameters unchanged and employing only two MoE layers. On Testbed C, we set \((a_g, e_g) = (3, 5)\) and \(S = 2048\text{, } 4096\). On Testbed D, we set \((a_g, e_g) = (8, 24)\) and \(S = 2048\text{, } 4096\).

\textbf{Throughput increases monotonically with $m_a$} 
As shown in Table~\ref{tab:r1}, throughput rises as \(m_a\) increases while \(r_1\) is fixed at 1. This confirms that our theoretical proof aligns with the experimental results. Demonstrating this monotonic relationship allows us to constrain the candidate variable space, thereby speeding up the search process.

\begin{table}[!t]
\centering
\caption{Throughput (tokens/s) of DeepSeek-V2 on Testbed C and Testbed D for varying $m_a$ and sequence length $S$.}
\label{tab:r1}
\begin{tabular}{cccccc}
\toprule
\textbf{Testbed} & $\boldsymbol{S}$ & $\boldsymbol{m_a=1}$ & $\boldsymbol{m_a=2}$ & $\boldsymbol{m_a=4}$ \\
\midrule 
\multirow{2}{*}{Testbed C}  & 2048 & 202.67 & 245.33 & 284.00 \\
\cmidrule(lr){2-5}
& 4096 & 230.12 & 254.84 & 270.35 \\
\cmidrule{1-5}
\multirow{2}{*}{Testbed D} & 2048 & 558.23 & 690.47 & 756.35 \\
\cmidrule(lr){2-5}
& 4096 & 632.41 & 682.49 & 707.57 \\
\bottomrule
\end{tabular}
\end{table}

\textbf{Throughput increases monotonically with $r_1$} 
Similarly, as shown in Table~\ref{tab:ma}, throughput rises as \(r_1\) increases while \(m_a\) is fixed at 1. This result further validates our theoretical proof. Establishing this second monotonic relationship similarly constrains the search space, improving overall optimization efficiency.

\begin{table}[!t]
\centering
\caption{Throughput (tokens/s) of DeepSeek-V2 on Testbed C and Testbed D for varying $r_1$ and sequence length $S$.}
\label{tab:ma}
\begin{tabular}{cccccc}
\toprule
\textbf{Testbed} & $\boldsymbol{S}$ & $\boldsymbol{r_1=1}$ & $\boldsymbol{r_1=2}$ & $\boldsymbol{r_1=4}$ \\
\midrule
 \multirow{2}{*}{Testbed C}  & 2048 & 202.67 & 257.24 & 282.04 \\
\cmidrule(lr){2-5}
& 4096 & 230.12 & 262.62 & 269.92 \\
\cmidrule{1-5}
 \multirow{2}{*}{Testbed D}  & 2048 & 558.23 & 711.36 & 760.48 \\
\cmidrule(lr){2-5}
& 4096 & 632.41 & 714.66 & 735.46 \\
\bottomrule
\end{tabular}
\end{table}





\subsection{Evaluation on Real-World Models}
We evaluate the average end-to-end training iteration time for the small DeepSeek-V2 236B~\cite{deepseek_v3} model, using an 8-layer configuration on testbed A, a 4-layer configuration on testbed B, and a 16-layer configuration on testbed C and D. Additionally, we assess the performance of the small Qwen3-235B-A22B~\cite{qwen3} model, with a 24-layer configuration on Testbed A and a 12-layer configuration on Testbed B and a 48-layer configuration on Testbed C and D. Our approach, FinDEP, is compared against the state-of-the-art PPPipe~\cite{MegaScale_Infer}, for which we provide our own reimplementation to ensure a fair comparison. 

Table~\ref{tab:real_world_combined_modified} reports the average iteration throughput (tokens per second) across different sequence lengths (specifically 1024, 2048, 4096, and 8192), two model backbones (DeepSeek-V2 and Qwen3), and four testbeds (A, B, C, D). Each throughput value represents the average of three independent runs. The data demonstrate that FinDEP consistently outperforms the optimally configured PPPipe across all experimental dimensions. The speedup achieved by FinDEP, indicated in parentheses within the table, ranges from 1.02$\times$ to 1.61$\times$. This performance advantage holds true for varying computational scales (testbeds A through D) and is evident across the full spectrum of tested sequence lengths. When the sequence is very long, FinDEP is much faster (see the bold numbers 1.53$\times$ and 1.61$\times$ in the table). Notably, the solver completes in under 1 second.


\begin{table*}[!t]
\centering
\caption{Average iteration throughput (tokens per second) comparison, where each number is the average of 3 independent runs. The values in brackets represent the speedups achieved by FinDEP compared to PPPipe with optimal $ep$, $dp$, $m_a$, and $r_1$ settings.}
\label{tab:real_world_combined_modified}
\begin{tabular}{ccccccccccc}
\toprule
& & \multicolumn{2}{c}{\textbf{Testbed A}} & \multicolumn{2}{c}{\textbf{Testbed B}} & \multicolumn{2}{c}{\textbf{Testbed C}} & \multicolumn{2}{c}{\textbf{Testbed D}} \\
\cmidrule(lr){3-4} \cmidrule(lr){5-6} \cmidrule(lr){7-8} \cmidrule(lr){9-10}
\textbf{Backbone} & $\boldsymbol{S}$ & \textbf{PPPipe} & \textbf{FinDEP} & \textbf{PPPipe} & \textbf{FinDEP} &  \textbf{PPPipe} & \textbf{FinDEP} &  \textbf{PPPipe} & \textbf{FinDEP} \\
\midrule
\multirow{3}{*}{DeepSeek} & 1024 & 48.50 & 53.40 (1.10$\times$) & 86.70 & 93.04 (1.07$\times$) & 62.31 & 63.35 (1.02$\times$) & 149.58 & 161.50 (1.08$\times$) \\
& 2048 & 46.28 & 50.27 (1.09$\times$) & 81.99 & 86.63 (1.06$\times$) & 56.63 & 58.14 (1.03$\times$) & 134.42 & 150.82 (1.12$\times$) \\
& 4096 & 44.21 & 51.47 (1.16$\times$) & 81.04 & 85.84 (1.06$\times$) & 49.80 & 54.73 (1.10$\times$) & 120.83 & 132.07 (1.10$\times$)  \\
\midrule
\multirow{4}{*}{Qwen} & 1024 & 13.94 & 15.81 (1.13$\times$) & 31.52 & 35.09 (1.11$\times$) & 35.70 & 36.86 ($1.03 \times$)\ & 94.97 & 102.60 (1.08$\times$) \\
& 2048 & 14.00 & 15.85 (1.20$\times$) & 25.46 & 27.39 (1.08$\times$) & 32.78 & 33.50 ($1.02 \times$) & 83.12 & 90.15(1.08$\times$)  \\
& 4096 & 13.80 & 15.55 (1.13$\times$) & 22.48 & 27.64 (1.23$\times$) & 28.01 & 30.06 (1.07$\times$) & 61.59 & 76.53 (1.24$\times$)  \\
& 8192 & 8.57 & 13.14 (\textbf{1.53}$\times$) & 15.98 & 25.71 (\textbf{1.61}$\times$) & 20.14 & 27.12 (\textbf{1.35}$\times$) & 37.19 & 45.26 (\textbf{1.22}$\times$) \\
\bottomrule
\end{tabular}

\end{table*}



\begin{table*}[!t]
\centering
\caption{Average iteration throughput (tokens per second) comparison. The values in brackets represent the speedups achieved by our FinDEP compared to PPPipe with given $eg$, $ag$ settings.}
\label{tab:exp_online_combined}
\begin{tabular}{ccccccccccc}
\toprule
 &  & \multicolumn{2}{c}{\textbf{Testbed A}} & \multicolumn{2}{c}{\textbf{Testbed B}} & \multicolumn{2}{c}{\textbf{Testbed C}} & \multicolumn{2}{c}{\textbf{Testbed D}} \\
\cmidrule(lr){3-4} \cmidrule(lr){5-6} \cmidrule(lr){7-8} \cmidrule(lr){9-10}
\textbf{Backbone} & \textbf{Tokens} & \textbf{PPPipe} & \textbf{FinDEP} & \textbf{PPPipe} & \textbf{FinDEP} & \textbf{PPPipe} & \textbf{FinDEP} & \textbf{PPPipe} & \textbf{FinDEP} \\
\midrule
\multirow{2}{*}{DeepSeek} & 3072 & 28.23 & 29.34 (1.04$\times$) & 44.66 & 50.13 (1.12$\times$)  & 30.24 &  31.13 (1.03$\times$) & 98.80 & 121.07 (1.23$\times$) \\
& 6144 & 41.88 & 44.25 (1.06$\times$) & 69.64 & 75.99 (1.09$\times$)  & 36.67 & 38.13 (1.04$\times$) & 124.69 & 142.36 (1.14$\times$)  \\
\midrule
\multirow{2}{*}{Qwen} & 3072 & 9.14 & 10.95 (1.20$\times$) & 16.24 & 18.56 (1.14$\times$) & 19.15 & 19.16 (1.00$\times$) & 40.94 & 50.71 (1.24$\times$) \\
& 6144 & 13.54 & 15.28 (1.13$\times$) & 22.71 & 30.43 (1.09$\times$) & 30.19 & 30.43 (1.01$\times$) & 67.07 & 78.69 (1.17$\times$) \\
\bottomrule
\end{tabular}
\end{table*}


\textbf{Discussion.} In our configuration of testbed A with the DeepSeek backbone, we observe that FinDEP effectively hides communication costs, approaching near-optimal performance. Compared to PPPipe under the same conditions (e.g., $(eg, ag)$), FinDEP reduces communication by 1.7$\times$ as shown in Table~\ref{tab:deepseek_comparison}. This indicates that, for shorter sequences, communication optimizations offer limited improvement. However, for longer sequences, communication becomes the primary bottleneck. For instance, with a sequence length of 4096, there is a 25.87 ms gap where computation and communication do not overlap. This emphasizes the near-optimal performance of our solution.

\begin{table}[!t]
\centering
\caption{Non-overlapped communication time for naive DEP (Naive-DEP) without pipelining, PPPipe, and FinDEP in DeepSeek-V2 on testbed A.}
\label{tab:deepseek_comparison}
\begin{tabular}{cccc}
\toprule
$\boldsymbol{S}$ & \textbf{Naive-DEP} & \textbf{PPPipe} & \textbf{FinDEP} \\
\midrule
4096 & 905.49ms & 528.94ms & 309.81ms \\
2048 & 536.22ms & 144.32ms & 52.60ms \\
1024 & 194.95ms & 188.65ms & 97.33ms \\
\bottomrule
\end{tabular}
\end{table}

\subsection{Evaluation on Online Settings}

In the online setting, reboot costs limit frequent changes to $ag$ and $eg$. Additionally, the unpredictable user prompt length (i.e., sequence length) complicates DEP deployment. However, our fast solver addresses this by quickly adjusting $r_1$, $r_2$, and the execution order after receiving the prompt length. We evaluate FinDEP with the following configurations: for DeepSeek-V2, $(ag, eg) = (3, 5)$, and for Qwen3-MoE, $(ag, eg) = (4, 4)$ on Testbeds A, B, and C.
For Testbed D, we set $(ag,eg) = (8, 24)$ for both DeepSeek-V2 and Qwen3-MoE. Two scenarios highlight differences in the mean number of arriving tokens. Table~\ref{tab:exp_online_combined} shows that our FinDEP, using the fast solver in Algorithm~\ref{alg:optimal_config}, outperforms the static schedule with the best PPPipe configuration at a sequence length of 2048. By adjusting $r_2$ and $r_1$, we improve the throughput up to 1.20$\times$.


\textbf{Discussion.} In the configuration utilizing the Qwen backbone on Testbed C, we observe that FinDEP does not achieve significant performance gains over PPPipe. As indicated in Table~\ref{tab:exp_online_combined}, FinDEP attains only 1.0$\times$ to 1.1$\times$ the throughput of PPPipe under identical $(ag, eg)$ settings. This result aligns with the expectations of Amdahl's Law, which bounds the maximum speedup achievable by optimizing only a portion of the system. Specifically, the high-bandwidth NVLink interconnect on the H20 GPUs shown in Table~\ref{tab:server-config} renders communication time a comparatively minor component of total runtime. Consequently, further optimization of the execution schedule yields diminishing returns, as the system is primarily constrained by other computational factors.

In contrast, performance improves substantially on Testbed D, where communication and computation overheads are more balanced. Communication overhead increases relative to Testbed C, while per-GPU computation time decreases because experts are distributed across more GPUs. With this improved balance, FinDEP’s throughput increases by up to 1.24$\times$ compared to PPPipe. However, at an extremely large scale, communication would again dominate end-to-end execution time. In that scenario, the relative improvement from schedule optimization would diminish because the proportion of time spent on non-accelerated components would increase once more.


\section{Related Work}

Distributed deep learning systems enhance the inference performance of MoE Large Language Models (LLMs) primarily through a triple strategy: one approach involves offloading and optimizing computation, where large model components or tasks are moved to the CPU, as demonstrated in works like~\cite{mixtral_offloading, expertflow}, or through dedicated computational optimizations such as those found in~\cite{moe_lightning, scheinfer, fiddler}, effectively addressing critical GPU memory constraints and boosting overall throughput. A second, parallel strategy focuses on minimizing expert decoding latency by identifying and duplicating frequently utilized "hot" experts across different resources, a technique leveraged by systems like~\cite{deepseek_v3, fastermoe, flexmoe} to ensure quicker access and processing for high-demand experts. Finally, the third key method employs model quantization~\cite{stepfun, deepseek_v3, PowerInfer}, which significantly reduces the data precision of the model weights often down to 4 bit or 8 bit, thereby shrinking the required communication volume between devices at the cost of a minor, acceptable trade-off in model accuracy or performance, ultimately yielding substantial gains in network efficiency.


Disaggregation is commonly used in LLM serving architectures to optimize inference performance in key ways~\cite{TetriInfer, distserve, kvcache, vllm, sglang}. For example, DistServe~\cite{distserve} disaggregates prefill and decode computations onto separate GPUs, boosting parallelism and improving resource allocation for better performance.  Building on this, recent works have pushed for physical disaggregation. Mooncake~\cite{mooncake} utilizes a disaggregated architecture that separates the KVCache pool from the inference engines, leveraging high-speed interconnects to enable stateless inference workers.



\section{Conclusion}

In this paper, we propose FinDEP, a fine-grained task scheduling framework designed to optimize MoE inference under disaggregated expert parallelism. By partitioning computation and communication into smaller tasks and formulating a formal optimization problem, FinDEP maximizes task overlap and resource utilization. We evaluate FinDEP across four GPU testbeds, including a large-scale 32-GPU system, using representative MoE backbones such as DeepSeek-V2 and Qwen3-MoE. Experimental results demonstrate that FinDEP achieves significant performance gains, providing speedups of up to $1.61\times$ over the best-configured PPPipe algorithm. Notably, on the 32-GPU system, FinDEP still delivers a robust speedup of up to $1.24\times$ in offline scenarios. Furthermore, our solver derives near-optimal configurations in under one second, enabling FinDEP to adapt in real-time to dynamic workloads.


\bibliographystyle{ACM-Reference-Format}
\bibliography{sample-base}

\end{document}